\newtheorem{theorem}{Theorem}[section]
\newtheorem{proposition}[theorem]{Proposition}
\newtheorem{lemma}[theorem]{Lemma}
\theoremstyle{definition}
\newtheorem{definition}[theorem]{Definition}
\title{\bf On Multi-Level Apportionment}
\author[1]{Ulrike Schmidt-Kraepelin}
\author[2]{Warut Suksompong}
\author[2]{Steven Wijaya}
\affil[1]{TU Eindhoven, The Netherlands}
\affil[2]{National University of Singapore, Singapore}
\date{\vspace{-10mm}}
\begin{document}

\maketitle

\begin{abstract}
Apportionment refers to the well-studied problem of allocating legislative seats among parties or groups with different entitlements. We present a multi-level generalization of apportionment where the groups form a hierarchical structure, which gives rise to stronger versions of the upper and lower quota notions. We show that running Adams' method level-by-level satisfies upper quota, while running Jefferson's method or the quota method level-by-level guarantees lower quota. Moreover, we prove that both quota notions can always be fulfilled simultaneously. 
\end{abstract}

\section{Introduction}
\label{sec:intro}

The problem of \emph{apportionment} in mathematics and politics involves distributing a set of identical resources, often legislative seats, among a number of parties, regions, or groups with differing entitlements \citep{BalinskiYo01,Pukelsheim14}.
These entitlements are typically derived from the populations of the regions or groups, or the votes that the parties receive in an election.
The difficulty of apportionment stems from the fact that only integer numbers of seats can be apportioned whereas the proportional shares according to the entitlements are rarely integers.
In addition to its long history, apportionment continues to receive significant interest from researchers nowadays \citep{Brams2019,cembrano2022multidimensional,Brill2024,Mathieu2024}.

Since the exact proportional share of each group---also called the \emph{quota}---cannot always be achieved, two natural and well-studied notions in apportionment are \emph{upper quota} and \emph{lower quota}.\footnote{\citet{Koriyama2013} suggested other reasons why one may want to deviate from strict proportionality.}
Upper quota dictates that each group should receive at most its quota rounded up, while lower quota requires the group to obtain at least its quota rounded down.
Prominent apportionment methods that fulfill at least one of the quota notions include \emph{Adams' method}, \emph{Jefferson's method}, and the \emph{quota method}.
Each of these methods can be formulated as an iterative procedure that allocates one seat at a time, with the recipient of each subsequent seat determined by the allocation thus far.
Adams' method guarantees upper quota, while Jefferson's method ensures lower quota.
The quota method is an adaptation of Jefferson's method that checks against upper quota violations, and thereby achieves both quota notions.
Due to their iterative nature, all three methods satisfy an intuitive property called \emph{house monotonicity}, which means that the number of seats that each group receives does not decrease as the total number of seats increases.\footnote{For a detailed overview of the theory of apportionment, we refer to Appendix~A of the book by \citet{BalinskiYo01}.
In particular, the concept of satisfying upper and lower quota is called ``staying within the quota'' and discussed in Section~7 of Appendix~A of that book.}

In this paper, we study a generalized version of apportionment where instead of the groups being on a single level, they form an arbitrary hierarchical structure.
This is applicable, for instance, when seats are divided among cities in a country and then among districts in each city.
Ideally, we would like the quota notions to be satisfied not only for cities with respect to the country or for districts with respect to cities, but also for districts with respect to the country.
Another example is when university personnel need to be allotted to faculties and further to departments within each faculty.
We represent the hierarchical structure by a tree and each group by a node in the tree, and call the resulting problem \emph{multi-level apportionment}.
Note that canonical apportionment corresponds to the special case where the tree consists of a single level (besides its root).
The system of \emph{apparentment}, also known as \emph{list alliances} \citep{Leutgab2009,Bochsler2010,Karpov2015}, can be seen as two-level apportionment, as it allows groups to form a coalition and receive seats collectively in the main apportionment process, before distributing these seats among themselves.
However, research on apparentment has primarily focused on the incentives for groups to form such coalitions, rather than on extending the quota notions and house monotonicity to a multi-level framework.

To illustrate the challenges that multi-level apportionment brings, consider the two trees in \Cref{fig:intro-tree}, and assume that there are six seats to be allocated.
In the single-level tree on the left, each leaf node has a weight---representing its entitlement---of $1/4$ relative to the root.
Giving two seats each to nodes $1$ and $2$ and one seat each to nodes $3$ and $4$ is sufficient to satisfy both upper and lower quota.
On the other hand, in the multi-level tree on the right, each leaf node again has weight $1/4$ relative to the root, but nodes $1$ and $2$ belong to one group with weight $1/2$ (represented by node~$5$) while nodes $3$ and $4$ belong to another such group (represented by node~$6$).
If we use the same distribution of seats to nodes $1,2,3,4$ as before, node~$5$ will receive four seats while node~$6$ will receive only two.
As a result, the upper quota of three for node~$5$ and the lower quota of three for node~$6$ relative to the root are both violated.

\begin{figure}[t]
\centering
\begin{tikzpicture}[
    node/.style={circle, draw, minimum size=0.75cm},
    edge from parent/.style={draw},
    level 1/.style={level distance=1.5cm},
    level 2/.style={sibling distance=1.5cm, level distance=1.5cm},
  ]

  % Left tree
  \node[node] (A) {$0$}
    child {node[node] {$1$} edge from parent node[left, xshift=-0.3cm] {$\frac{1}{4}$}}
    child {node[node] {$2$} edge from parent node[left] {$\frac{1}{4}$}}
    child {node[node] {$3$} edge from parent node[right] {$\frac{1}{4}$}}
    child {node[node] {$4$} edge from parent node[right, xshift=0.3cm] {$\frac{1}{4}$}};
    
  % Right tree
  \node[node] (B) at (7, 0) {$0$} [sibling distance=3cm]
    child {node[node] {$5$}
      child {node[node] {$1$} edge from parent node[left] {$\frac{1}{2}$}}
      child {node[node] {$2$} edge from parent node[right] {$\frac{1}{2}$}}
      edge from parent node[left, xshift=-0.15cm] {$\frac{1}{2}$}}
    child {node[node] {$6$}
      child {node[node] {$3$} edge from parent node[left] {$\frac{1}{2}$}}
      child {node[node] {$4$} edge from parent node[right] {$\frac{1}{2}$}}
      edge from parent node[right, xshift=0.15cm] {$\frac{1}{2}$}};
\end{tikzpicture}
\caption{Example tree structures of single-level apportionment (left) and multi-level apportionment (right).}
\label{fig:intro-tree}
\end{figure}
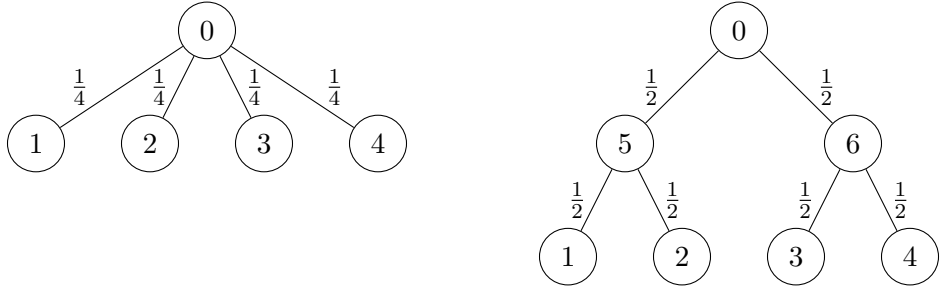

In spite of this apparent difficulty, we prove in \Cref{sec:within-quota} that for any multi-level apportionment instance, it is possible to distribute the seats so as to fulfill both upper and lower quota. 
Specifically, our notions of upper and lower quota require each node in the tree to satisfy quotas derived by comparing its seat allocation to that of \emph{every} of its ancestors.
This means that strong quota guarantees can be made regardless of the hierarchical structure and entitlements.
In \Cref{sec:adams-jefferson}, we explore running Adams' and Jefferson's methods level-by-level and show that the resulting rules ensure upper quota and lower quota, respectively.
As a consequence, either quota notion can be satisfied in conjunction with house monotonicity.
In \Cref{sec:quota-method}, we demonstrate that while an analogous extension of the quota method satisfies lower quota, the single-level check typically performed by the quota method is insufficient to guard against upper quota violations in the multi-level setting.
On the other hand, modifying the check to span multiple levels, while guaranteeing upper quota, can result in a violation of lower quota.
Finally, we conduct an experimental analysis in \Cref{sec:experiment}.

\section{Preliminaries}
\label{sec:prelim}

An instance of multi-level apportionment consists of a rooted tree with $n$ nodes for some positive integer~$n$, entitlements (to be defined in the next paragraph), and a non-negative number of seats $h$ to be allocated.
Assume that the nodes are numbered $0,1,\dots,n-1$, with the root node having number~$0$.

For each node~$i$, denote by $W_i\in (0,1]$ its \emph{entitlement} relative to its parent (for the root node~$0$, we let $W_0 = 1$). We assume that entitlements are normalized, i.e., $\sum_{c\in C(i)} W_c = 1$ holds for every non-leaf node $i$, where $C(i)$ denotes the set of $i$'s children. Moreover, for any node $i$, we let $A(i)$ be the set of $i$'s ancestors. Formally, $A(0) = \{0\}$,\footnote{This choice will be more convenient for our purposes than defining $A(0) = \emptyset$.} and recursively down the tree, for each node $i$ we define $A(i) = \{j\} \cup A(j)$, where $j$ is the parent of $i$. 

The $h$ seats need to be allocated among the nodes in the tree.
Let $V_i$ be the number of seats allocated to node~$i$.
The root node always receives all $h$ seats, and the number of seats allocated to any non-leaf node must be equal to the total number of seats allocated to its children.
That is, $V_0 = h$ and $V_i = \sum_{c \in C(i)} V_c$ for each non-leaf node~$i$.
A \emph{(multi-level) apportionment rule} maps any (multi-level) apportionment instance to an allocation of the seats.

For each node~$i$, denote by $R_i$ the entitlement of node~$i$ relative to the root node, i.e., $R_i := \prod_{a\in A(i)\cup \{i\}} W_a$.
\begin{definition}
The \emph{lower quota} of node~$i$ is defined as $LQ_i := \max_{a \in A(i)} \lfloor \frac{R_i}{R_a} \cdot V_a \rfloor$, and the \emph{upper quota} as $UQ_i := \min_{a \in A(i)} \lceil \frac{R_i}{R_a} \cdot V_a \rceil$. 
\end{definition}
Note that both lower and upper quota capture the quota requirements of a node with respect to every of its ancestors in the tree.
An allocation of seats is said to fulfill lower (resp.,~upper) quota if $V_i\ge LQ_i$ (resp., $V_i\le UQ_i$) for every~$i$.\footnote{One could define lower and upper quota with respect to only the root node, i.e., $\lfloor R_i\cdot h\rfloor$ and $\lceil R_i\cdot h\rceil$, respectively.
This leads to weaker quota notions, so our positive results carry over to these notions as well.
Also, our quota notions are stronger than those defined by considering each node only with respect to its parent.
For example, in the instance shown in \Cref{fig:quota2-lq-tree}, the allocation described in the proof of \Cref{prop:quota2-lower-quota} satisfies the latter lower quota notion but not ours.
\label{foot:relaxation}}
An apportionment rule satisfies lower (resp.,~upper) quota if, for every instance, it returns a seat allocation that fulfills lower (resp.,~upper) quota.
As an example, for the right tree in \Cref{fig:intro-tree} with $h = 6$, consider an allocation with $V_0 = 6$, $V_1 = V_2 = 2$, $V_3 = V_4 = 1$, $V_5 = 4$, and $V_6 = 2$.
This allocation violates lower quota because $LQ_6 = \lfloor \frac{1}{2}\cdot 6\rfloor = 3 > V_6$; it also violates upper quota because $UQ_5 = \lceil \frac{1}{2}\cdot 6\rceil = 3 <  V_5$.

Finally, an apportionment rule satisfies \emph{house monotonicity} if it has the following property: for any instance, if the total number of seats to be allocated increases from $h$ to $h+1$, then every node receives no fewer seats than before.
A common way to satisfy house monotonicity is to iteratively allocate one seat at a time.
For such an iterative method, we denote by $V_i^h$ the number of seats allocated to node~$i$ after $h$~seats have been allocated in total.

\section{Staying Within the Quotas}
\label{sec:within-quota}

In single-level apportionment, it is easy to see that there always exists a seat allocation that fulfills both upper and lower quota.\footnote{For example, \emph{Hamilton's method} first gives each node its lower quota, then assigns the remaining seats to the nodes with the highest fractional remainders of their quotas until all seats have been allocated.}
The situation is less clear in multi-level apportionment, however, as the internal nodes can introduce several additional constraints on the permissible allocation.
Nevertheless, we show that for any instance, there exists an allocation that respects both quotas, thereby demonstrating that staying within the quotas is always possible regardless of the hierarchical structure.

\begin{theorem}
\label{thm:both-quotas}
    For every multi-level apportionment instance, there exists a seat allocation that fulfills both upper and lower quota.
\end{theorem}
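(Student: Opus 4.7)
The plan is to construct the desired allocation top-down, visiting nodes in BFS order. Starting from $V_0 = h$, whenever we arrive at an internal node $a$ whose ancestor allocations have already been fixed (and are mutually within quota by an inductive invariant), we distribute $V_a$ among its children as follows. For every child $c$ of $a$ we set
\[
L_c := \max_{a' \in A(c)} \bigl\lfloor (R_c/R_{a'}) V_{a'} \bigr\rfloor, \qquad U_c := \min_{a' \in A(c)} \bigl\lceil (R_c/R_{a'}) V_{a'} \bigr\rceil,
\]
which collect the tightest quota constraints on $V_c$ inherited from all its ancestors, and pick an integer $V_c \in [L_c, U_c]$ so that $\sum_{c \in C(a)} V_c = V_a$. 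Such a choice satisfies both quotas at $c$ and preserves the invariant, so the task reduces to establishing two feasibility facts at each step: (a) every interval $[L_c, U_c]$ is non-empty, and (b) $\sum_{c \in C(a)} L_c \le V_a \le \sum_{c \in C(a)} U_c$.

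For (a), consider any two ancestors $a', a'' \in A(c)$ with $a'$ the older. The invariant gives $|V_{a''} - (R_{a''}/R_{a'}) V_{a'}| < 1$; multiplying by $R_c/R_{a''} \le 1$ shows that the two ``targets'' $(R_c/R_{a'}) V_{a'}$ and $(R_c/R_{a''}) V_{a''}$ differ by less than one. The associated unit-length integer intervals $[\lfloor \cdot \rfloor, \lceil \cdot \rceil]$ therefore pairwise overlap, and Helly's theorem on the real line supplies a common point, which gives $L_c \le U_c$.

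For (b), the same invariant applied to the parent $a$ and any older ancestor $a'$ gives $|V_a - (R_a/R_{a'}) V_{a'}| < 1$; multiplying by $W_c = R_c/R_a$ puts $(R_c/R_{a'}) V_{a'}$ strictly within $W_c$ of $W_c V_a$ (trivially true also for $a' = a$). Hence $L_c < W_c V_a + W_c$ and $U_c > W_c V_a - W_c$ for every child $c$. Summing over $c \in C(a)$ and using $\sum_{c} W_c = 1$ yields $\sum_c L_c < V_a + 1$ and $\sum_c U_c > V_a - 1$, so integrality concludes. With (a) and (b) in hand, a greedy water-filling (initialize $V_c = L_c$ and increment any $V_c < U_c$ until the total reaches $V_a$) delivers the required integer $V_c$'s.

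The step I expect to be the main obstacle is (b): distant ancestors can simultaneously inflate the $L_c$ and shrink the $U_c$, so a priori the sum bounds might fail. The resolution is that each such tightening only costs $W_c$ of slack relative to the parent-only bound, and the identity $\sum_c W_c = 1$ exactly absorbs this slack, preventing any net violation of the demand $V_a$.
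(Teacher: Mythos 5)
Your proof is correct, and it takes a partly different route from the paper's. The paper first reduces an arbitrary instance to one on a full binary tree (\Cref{lemma:reduction}) and then proves \Cref{lemma:both-quotas-full-binary} by induction on the number of nodes, extending an allocation on the tree with two sibling leaves removed; you instead build the allocation top-down on the arbitrary tree, maintaining the invariant that all fixed ancestor--descendant pairs are within quota. Your step (a) is essentially the paper's argument that $e_{\text{max}} - e_{\text{min}} < 1$: the paper splits into two cases according to which ancestor sits higher, invoking lower quota in one case and upper quota in the other, while your combined invariant $|V_{a''} - (R_{a''}/R_{a'}) V_{a'}| < 1$ packages both directions, and the appeal to Helly is just the one-dimensional observation that pairwise $\lfloor t_{a'} \rfloor \le \lceil t_{a''} \rceil$ yields $L_c \le U_c$. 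The genuine divergence is step (b). The paper exploits that the targets $e_{c,a} = (R_c/R_a) V_a$ of the two siblings are proportional across ancestors (ratio $R_x/R_y$), so the per-ancestor floors and ceilings attain their extrema at a common ancestor, which combined with \eqref{eq:lv-uv-per-anc} gives $LQ_x + LQ_y \le V_i \le UQ_x + UQ_y$ exactly. You instead bound each child's $L_c$ and $U_c$ strictly within $W_c$ of the proportional share $W_c V_a$, sum using $\sum_{c} W_c = 1$ to get $\sum_c L_c < V_a + 1$ and $\sum_c U_c > V_a - 1$, and close by integrality of both sides. Your version buys generality and brevity: it handles arbitrary branching in one stroke, rendering the full-binary-tree reduction unnecessary, and it dispenses with the same-argmax observation (though that observation would in fact also generalize to many children, since $\arg\max_a e_{c,a} = \arg\max_a V_a/R_a$ is child-independent); the paper's version, conversely, obtains the sum inequalities without any rounding slack, and the binary reduction keeps the per-step bookkeeping to two children. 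One small point worth making explicit in your write-up: the strictness of $|t_{c,a'} - W_c V_a| < W_c$ at $a' = a$ uses $W_c > 0$, which holds since entitlements lie in $(0,1]$.
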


\Cref{thm:both-quotas} is an immediate consequence of Lemmas~\ref{lemma:reduction} and \ref{lemma:both-quotas-full-binary}, which we state and prove next.
Recall that a \emph{full binary tree} is a binary tree in which every node has either $0$ or $2$ children.

\begin{lemma}
\label{lemma:reduction}
Suppose that for every multi-level apportionment instance with a full binary tree, there exists a seat allocation that fulfills both upper and lower quota.
Then, the same holds for every multi-level apportionment instance (with an arbitrary tree).
\end{lemma}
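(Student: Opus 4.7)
The plan is to reduce any multi-level apportionment instance to one on a full binary tree so that a valid (quota-fulfilling) allocation on the binary instance projects back to a valid allocation on the original instance. The reduction consists of two local tree transformations --- collapsing unary chains and binarizing nodes of arity at least three --- each designed to preserve the root-relative entitlement $R_i$ at every original node and to introduce only intermediate ancestors that tighten, rather than loosen, the quota constraints on the original nodes.

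First I would collapse every non-leaf node $i$ with a single child $c$. Normalization forces $W_c = 1$, hence $V_i = V_c$ in any feasible allocation; deleting $i$ and attaching $c$ to $i$'s parent $p$ with entitlement $W_i$ leaves $R_c$ unchanged, preserves the normalization at $p$, and drops only the ancestor constraint from $i$ on $c$, whose lower and upper bounds both equal $V_c$ and are thus vacuous. Second, for each remaining node $i$ with children $c_1, \dots, c_k$ where $k \geq 3$, I introduce a fresh intermediate node $i'$, make the new children of $i$ be $c_1$ (with entitlement $W_{c_1}$) and $i'$ (with entitlement $1 - W_{c_1}$), and attach $c_2, \dots, c_k$ as children of $i'$ with renormalized entitlements $W_{c_j}/(1-W_{c_1})$. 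All new entitlements lie in $(0,1]$ (using $W_{c_1} < 1$) and the normalization holds at both $i$ and $i'$; a direct telescoping shows $R_{c_j}$ is unchanged for every original child $c_j$. Iterating inside $i'$'s subtree and on each $c_j$'s subtree produces a full binary tree in which every original node retains its original $R$-value.

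Given the hypothesized quota-fulfilling allocation $\widetilde V$ on this binary tree, define $V_i := \widetilde V_i$ for every original node $i$. Feasibility $V_i = \sum_{c \in C(i)} V_c$ at every original non-leaf $i$ follows by telescoping the children-sum identities across $i$ and all intermediate nodes introduced inside the expansion of $i$. For the quotas, observe that for any original node $j$ and any original ancestor $a$, $a$ is still an ancestor of $j$ in the binary tree and the ratio $R_j/R_a$ is preserved; hence the set of ancestors appearing in the definitions of $LQ_j$ and $UQ_j$ in the binary tree is a superset of the one in the original tree. The binary-tree lower (resp.\ upper) quota at $j$ is therefore at least (resp.\ at most) the original one, so $\widetilde V$ satisfying the former forces $V$ to satisfy the latter.

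The main obstacle is purely bookkeeping in the binarization step: verifying that products of the renormalized entitlements telescope to the original $R$-values at every original descendant, and confirming that each newly introduced intermediate node sits strictly between an original ancestor--descendant pair, so that its presence in the binary-tree ancestor set can only strengthen, never weaken, the quota constraints at the original nodes below it.
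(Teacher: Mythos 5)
Your proposal is correct and follows essentially the same route as the paper: collapse single-child nodes (which must receive the same seat count as their parent) and binarize high-arity nodes by splitting off one child and grouping the rest under a fresh intermediate node with renormalized entitlements, then observe that every original ancestor--descendant quota constraint survives in the binary tree (with the same ratio $R_j/R_a$), so the new constraints form a superset and any allocation fulfilling both quotas there projects back to one on the original instance. Your additional bookkeeping---verifying $R$-preservation, feasibility by telescoping, and the vacuousness of the dropped unary constraint---merely spells out details the paper leaves implicit.
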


\begin{lemma}
\label{lemma:both-quotas-full-binary}
For every multi-level apportionment instance with a full binary tree, there exists a seat allocation that fulfills both upper and lower quota.
\end{lemma}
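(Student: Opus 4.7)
My plan is to construct a valid allocation by a top-down induction on the tree, starting from $V_0=h$ and, at each internal node $i$, deciding how to split its already-fixed allocation $V_i$ among its two children $c_1,c_2$. The invariant I maintain is that every node $j$ processed so far satisfies $V_j\in[\lfloor(R_j/R_a)V_a\rfloor,\lceil(R_j/R_a)V_a\rceil]$ for every $a\in A(j)$; once this holds down to the leaves, both upper and lower quota are met at every node. The base case $V_0=h$ is immediate since $A(0)=\{0\}$.

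At an internal node $i$, let $Y=\{(R_i/R_a)V_a:a\in\{i\}\cup A(i)\}$, $Z=\max Y$, $z=\min Y$. Using $R_{c_j}=W_{c_j}R_i$, the invariant at $c_j$ restricts $V_{c_j}$ to the integer interval $[L_j,U_j]:=[\lfloor W_{c_j}Z\rfloor,\lceil W_{c_j}z\rceil]$, and the inductive step reduces to finding integers $V_{c_1},V_{c_2}$ with $V_{c_1}+V_{c_2}=V_i$ and $V_{c_j}\in[L_j,U_j]$. Any such solution propagates the invariant to $c_1,c_2$: by monotonicity of the floor and ceiling, $V_{c_j}\in[\lfloor W_{c_j}y\rfloor,\lceil W_{c_j}y\rceil]$ for every $y\in Y$, which is exactly the invariant at $c_j$ relative to every $a\in A(c_j)=\{i\}\cup A(i)$.

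The crux, and the main obstacle, is the distance bound $Z-z<1$. Consider two elements $y_k=(R_i/R_{a_k})V_{a_k}$ of $Y$; since $a_1,a_2\in\{i\}\cup A(i)$ both lie on the root-to-$i$ path, they are comparable, so WLOG $a_2$ is (equal to or) an ancestor of $a_1$. Then $y_1-y_2=(R_i/R_{a_1})(V_{a_1}-(R_{a_1}/R_{a_2})V_{a_2})$, and the inductive invariant applied at $a_1$ relative to $a_2$ gives $|y_1-y_2|<R_i/R_{a_1}\le 1$, the last inequality holding because $a_1\in\{i\}\cup A(i)$ implies $R_{a_1}\ge R_i$. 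Given $Z-z<1$, the checks $L_j\le U_j$ and $L_1+L_2\le V_i\le U_1+U_2$ follow from routine floor/ceiling arithmetic (using $W_{c_1}+W_{c_2}=1$ together with $|V_i-Z|<1$ and $|V_i-z|<1$ at $i$), and the two-variable integer feasibility problem always admits a solution, completing the induction.
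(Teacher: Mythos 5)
Your proposal is correct and takes essentially the same approach as the paper: the same per-child integer interval $[\lfloor W_{c}\cdot e_{\max}\rfloor,\lceil W_{c}\cdot e_{\min}\rceil]$, the same key spread bound $e_{\max}-e_{\min}<1$ derived from the already-guaranteed quota compliance between comparable ancestors, and the same two-integer split feasibility $L_1+L_2\le V_i\le U_1+U_2$. The only differences are cosmetic: you run a top-down induction maintaining an invariant where the paper inducts on node count by removing a sibling leaf pair, and your factoring out of $W_{c_j}$ from the start makes the paper's separate observation that the same ancestor maximizes both children's quotas (via the constant ratio $R_x/R_y$) automatic.
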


\begin{proof}[Proof of \Cref{lemma:reduction}]
We claim that an arbitrary instance can be reduced to one with a full binary tree.
Given an arbitrary instance, we can apply the following transformation rules for each node $i$: 
\begin{itemize}
    \item If $|C(i)| = 0$ or $2$, do nothing;
    \item If $|C(i)| = 1$, merge $i$ with its only child, since the two nodes must receive the same number of seats;
    \item If $|C(i)| > 2$, introduce a new intermediate node $j$ to become the parent of $|C(i)| - 1$ children of~$i$, and make $i$ the parent of both $j$ and the remaining child~$i^*$ of $i$.
    Set $W_j = \sum_{c\in C(j)} W_c$, and replace $W_c$ by $\frac{W_c}{1 - W_{i^*}}$ for each $c\in C(j)$.
    Then, apply the same transformation rules to $j$.
\end{itemize}
An example of the reduction is shown in \Cref{fig:tree-transform}.
Observe that all quota constraints between nodes in the original tree are still present in the new tree.
Therefore, a seat allocation that fulfills both quotas in the new tree does so in the original tree as well.
\end{proof}

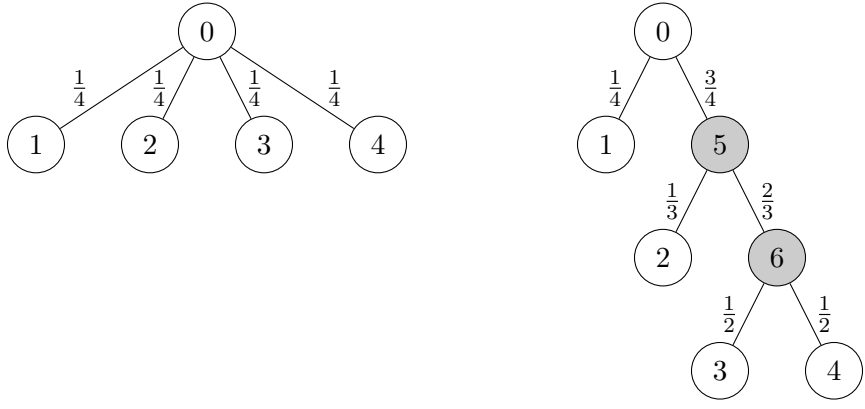
\begin{figure}[t]
\centering
\begin{tikzpicture}[
    node/.style={circle, draw, minimum size=0.75cm},
    edge from parent/.style={draw},
    level 1/.style={sibling distance=1.5cm, level distance=1.5cm},
    level 2/.style={sibling distance=1.5cm, level distance=1.5cm},
  ]

  % Left tree
  \node[node] (A) {$0$}
    child {node[node] {$1$} edge from parent node[left, xshift=-0.3cm] {$\frac{1}{4}$}}
    child {node[node] {$2$} edge from parent node[left] {$\frac{1}{4}$}}
    child {node[node] {$3$} edge from parent node[right] {$\frac{1}{4}$}}
    child {node[node] {$4$} edge from parent node[right, xshift=0.3cm] {$\frac{1}{4}$}};
    
  % Right tree
    \node[node] (B) at (6, 0) {$0$}
    child {node[node] {$1$} edge from parent node[left] {$\frac{1}{4}$}}
    child {node[node, fill=black!20] {$5$}
      child {node[node] {$2$} edge from parent node[left] {$\frac{1}{3}$}}
      child { node[node, fill=black!20] {$6$} 
        child {node[node] {$3$} edge from parent node[left] {$\frac{1}{2}$}}
        child {node[node] {$4$} edge from parent node[right] {$\frac{1}{2}$}}
        edge from parent node[right] {$\frac{2}{3}$}}
      edge from parent node[right] {$\frac{3}{4}$}};
\end{tikzpicture}

    \caption{Illustration of reducing an arbitrary tree to a full binary tree in the proof of \Cref{lemma:reduction}.
    Shaded nodes are the introduced intermediate nodes.}
    \label{fig:tree-transform}
\end{figure}

\begin{proof}[Proof of \Cref{lemma:both-quotas-full-binary}]
We proceed by induction, showing that if a satisfactory seat allocation can be found for all full binary trees with $n$ nodes, then the same is true for all full binary trees with $n+2$ nodes (note that the number of nodes in a full binary tree is always odd).
The base case $n = 1$ holds trivially.

Consider an arbitrary instance with $n+2 \ge 3$ nodes.
Let $i$ be a node such that both of its children, denoted by $x$ and $y$, are leaf nodes---such a node~$i$ always exists in a full binary tree when $n > 1$.
We have $W_x,W_y > 0$ and $W_x + W_y = 1$.
By the induction hypothesis, in the tree with $x$ and $y$ removed, there exists a seat allocation (with the same number of seats) that fulfills both quotas.

For any two nodes $c$ and $a$ such that $a \in A(c)$, let $e_{c,a} = \frac{R_c}{R_a} \cdot V_a$, and recall that $LQ_c = \max_{a \in A(c)} \lfloor e_{c,a} \rfloor$ and $UQ_c = \min_{a \in A(c)} \lceil e_{c,a} \rceil$. Fix $c\in\{x,y\}$, let $e_{\text{max}} := \max_{a \in A(c)} e_{c,a}$ and  $e_{\text{min}} := \min_{a \in A(c)} e_{c,a}$, and suppose that the maximum and minimum are achieved at ancestors $a_1$ and $a_2$ of $c$, respectively.
We will show that $e_{\text{max}} - e_{\text{min}} < 1$.
This is clear if $a_1 = a_2$, so assume that $a_1\ne a_2$.
If $a_1$ occupies a higher position in the tree than $a_2$, we have
\begin{align*}
0\le e_{\text{max}} - e_{\text{min}}
=  \frac{R_c}{R_{a_1}} \cdot V_{a_1} - \frac{R_c}{R_{a_2}} \cdot V_{a_2}  = \frac{R_c}{R_{a_2}} \left( \frac{R_{a_2}}{R_{a_1}} \cdot V_{a_1} - V_{a_2} \right).
\end{align*}
Notice that $\frac{R_c}{R_{a_2}} < 1$ since $a_2$ is an ancestor of $c$, and $\frac{R_{a_2}}{R_{a_1}} \cdot V_{a_1} - V_{a_2} < 1$ since lower quota for $a_2$ with respect to $a_1$ is satisfied.
Hence, $e_{\text{max}} - e_{\text{min}} < 1$ in this case.
On the other hand, if $a_2$ occupies a higher position in the tree than $a_1$, we have
\begin{align*}
0\le e_{\text{max}} - e_{\text{min}}
=  \frac{R_c}{R_{a_1}} \cdot V_{a_1} - \frac{R_c}{R_{a_2}} \cdot V_{a_2}  = \frac{R_c}{R_{a_1}} \left(V_{a_1} - \frac{R_{a_1}}{R_{a_2}} \cdot V_{a_2} \right).
\end{align*}
Notice that $\frac{R_c}{R_{a_1}} < 1$ since $a_1$ is an ancestor of $c$, and $V_{a_1} - \frac{R_{a_1}}{R_{a_2}} \cdot V_{a_2} < 1$ since upper quota for $a_1$ with respect to $a_2$ is satisfied.
Hence, $e_{\text{max}} - e_{\text{min}} < 1$ always holds.
Since $LQ_c = \lfloor e_{\text{max}} \rfloor$ and $UQ_c = \lceil e_{\text{min}} \rceil$, it follows that $LQ_c \leq UQ_c$.
Thus, there exists a number of seats $V_c$ that can be allocated to node~$c$ so that both quotas are satisfied for $c$ with respect to all of its ancestors.

To complete the proof, it remains to show that we can choose $V_x$ and $V_y$ so that $V_i = V_x + V_y$.
Since $R_x + R_y = R_i$, it holds that $e_{x, a} + e_{y, a} = e_{i, a}$ for each $a\in A(i)\cup\{i\}$. 
Thus, 
\begin{align}
\lfloor e_{x, a} \rfloor + \lfloor e_{y, a} \rfloor
\leq \lfloor e_{i, a} \rfloor \leq V_i \leq \lceil e_{i, a} \rceil \leq \lceil e_{x, a} \rceil + \lceil e_{y, a} \rceil \label{eq:lv-uv-per-anc}
\end{align}
for each $a\in A(i)\cup\{i\}$.

Now, for any $c\in\{x,y\}$ and any $a_1, a_2 \in A(c)$ (note that $A(c) = A(i)\cup\{i\}$), we know that $\frac{e_{x, a_1}}{e_{y, a_1}} = \frac{e_{x, a_2}}{e_{y, a_2}} = \frac{R_x}{R_y}$, which means that $e_{x, a_1} \leq e_{x, a_2}$ implies $e_{y, a_1} \leq e_{y, a_2}$. 
From the definition $LQ_c = \max_{a \in A(c)} \lfloor e_{c, a} \rfloor$, we have
    \begin{align*}
    LQ_x + LQ_y &= \max_{a_1 \in A(c)} \lfloor e_{x, a_1} \rfloor + \max_{a_2 \in A(c)} \lfloor e_{y, a_2} \rfloor
    = \max_{a \in A(c)} (\lfloor e_{x, a} \rfloor + \lfloor e_{y, a} \rfloor),
    \end{align*}
where the second equality holds by the previous sentence.
Similarly, $UQ_x + UQ_y = \min_{a\in A(c)} (\lceil e_{x, a} \rceil + \lceil e_{y, a} \rceil)$.
Combining these with \eqref{eq:lv-uv-per-anc}, we find that 
    \begin{align*}
        LQ_x + LQ_y \leq V_i \leq UQ_x + UQ_y.
    \end{align*}
    Together with the previously established claims that $LQ_x \leq UQ_x$ and $LQ_y \leq UQ_y$, this implies that we can choose $V_x$ and $V_y$ so that $V_i = V_x + V_y$, as desired.
\end{proof}

\section{Evaluating Adams and Jefferson}
\label{sec:adams-jefferson}

While \Cref{thm:both-quotas} ensures that both upper and lower quota can be fulfilled in any multi-level apportionment instance, it does not give rise to a method that satisfies house monotonicity.
A simple way to create such a method is to take an arbitrary house-monotone single-level method and run it on a multi-level instance one level at a time, starting at the root node and going down the tree---it is clear that if the total number of seats increases, no node receives fewer seats as a result.
In this section, we show that if we take Adams' method (resp., Jefferson's method),\footnote{The iterative form of these methods that we use here is described, e.g., by \citet[p.~100, Prop.~3.3]{BalinskiYo01}.} which satisfies upper quota (resp., lower quota) in single-level apportionment, the corresponding quota compliance is preserved in the multi-level setting.

We start with Adams' method, whose multi-level version is displayed as \Cref{algo:adams}.
The algorithm iteratively allocates one seat at a time based on the existing allocation.
We first prove an auxiliary lemma.

\begin{algorithm}[t]
    \caption{Multi-Level Adams' Method}
    \label{algo:adams}

    \DontPrintSemicolon
    \SetKwInOut{Input}{Input}
    \SetKwInOut{Output}{Output}
    \SetArgSty{textnormal}
    
    \Input {List $V^h$ representing the allocation for $h$ seats}
    \Output {List $V^{h+1}$ representing the allocation for $h+1$ seats}

    $V^{h+1} \gets V^h$ \hspace{5mm} \tcp{initialize $V^{h+1}$ to be $V^h$}
    $i \gets 0$ \hspace{5mm} \tcp{root node}
    $V^{h+1}_i \gets V^{h}_i + 1$ \hspace{5mm} \tcp{root node always receives the new seat}
    \While {$i \text{ is not a leaf node}$} {
        Let $c$ be a child of $i$ with the smallest $\frac{V^h_c}{R_c}$, breaking ties arbitrarily\;
        $V^{h+1}_c \gets V^{h}_c + 1$\;
        $i \gets c$\;
    }
\end{algorithm}

\begin{lemma}
\label{lemma:adams}
Let $h\ge 1$.
In \Cref{algo:adams}, suppose that the nodes that receive the $h$-th seat are $a_0,a_1,\dots,a_k$ from top to bottom, where $a_0 = 0$.
Then, $\frac{V^h_{a_{j-1}}}{R_{a_{j-1}}} \ge \frac{V^h_{a_{j}}}{R_{a_{j}}}$ for each $j\in \{1,2,\dots,k\}$.
\end{lemma}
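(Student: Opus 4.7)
The plan is to establish the inequality by a direct convex-combination argument applied at $a_{j-1}$. The central observation is that at any internal node, the ratio $V/R$ can be written as a weighted average of its children's ratios (with weights proportional to the children's entitlements $R_c$), and \Cref{algo:adams} selects $a_j$ to be precisely the child of $a_{j-1}$ minimizing this ratio.

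Concretely, I would first record two identities that hold at every internal node $i$ of the tree. Seat conservation gives $V^h_i = \sum_{c \in C(i)} V^h_c$, since every seat allocated to $i$ is passed down to exactly one child. Entitlement normalization gives $R_i = \sum_{c \in C(i)} R_c$, because $R_c = R_i \cdot W_c$ and $\sum_{c \in C(i)} W_c = 1$. Dividing these identities,
\[
\frac{V^h_i}{R_i} \;=\; \sum_{c \in C(i)} \frac{R_c}{R_i} \cdot \frac{V^h_c}{R_c},
\]
which expresses $V^h_i/R_i$ as a convex combination of the children's ratios, with positive weights $R_c/R_i$ summing to one. Therefore $V^h_i/R_i \geq \min_{c \in C(i)} V^h_c/R_c$. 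Applying this at $i = a_{j-1}$ and using that \Cref{algo:adams} picks $a_j$ to minimize $V^h_c/R_c$ over $c \in C(a_{j-1})$, the minimum in the combination is attained at $a_j$, yielding $V^h_{a_{j-1}}/R_{a_{j-1}} \geq V^h_{a_j}/R_{a_j}$ for each $j \in \{1,\dots,k\}$. No induction across $j$ or $h$ is needed: the $k$ inequalities are independent instances of a local argument at the relevant parent.

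I do not foresee a serious obstacle, since the proof is essentially a one-liner once the convex-combination identity is in hand. The only point requiring mild care is consistency of the allocation snapshot: both the identity $V^h_{a_{j-1}} = \sum_c V^h_c$ and the algorithm's minimization over the children of $a_{j-1}$ must be evaluated on the same state of $V$. This is automatic because \Cref{algo:adams} descends the tree and only modifies entries along the current path, leaving the children of $a_{j-1}$ untouched at the moment $a_j$ is selected, so the identity and the selection rule read off the same values.
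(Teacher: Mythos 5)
Your proof is correct and is essentially the same as the paper's: the paper establishes the key fact in contrapositive form, supposing $\frac{V^h_{a_{j-1}}}{R_{a_{j-1}}} < \frac{V^h_d}{R_d}$ for all $d\in C(a_{j-1})$ and deriving $V^h_{a_{j-1}} > V^h_{a_{j-1}}$ from $V^h_{a_{j-1}} = \sum_{d\in C(a_{j-1})} V^h_d$ and $\sum_{d\in C(a_{j-1})} R_d = R_{a_{j-1}}$, which is precisely your convex-combination inequality $\frac{V^h_i}{R_i} \ge \min_{c\in C(i)} \frac{V^h_c}{R_c}$ stated in a different way. Both arguments then finish identically by observing that the algorithm selects $a_j$ as the child minimizing $\frac{V^h_c}{R_c}$, so the local argument applies independently at each $j$ with no induction needed.
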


\begin{proof}
Fix any $j\in\{1,2,\dots,k\}$, and assume for contradiction that $\frac{V^h_{a_{j-1}}}{R_{a_{j-1}}} < \frac{V^h_{d}}{R_{d}}$ for all $d\in C(a_{j-1})$.
Then, we have 
\begin{align*}
V^h_{a_{j-1}} 
= \sum_{d\in C(a_{j-1})} V^h_d 
> \sum_{d\in C(a_{j-1})} R_d\cdot \frac{V^h_{a_{j-1}}}{R_{a_{j-1}}}
= V^h_{a_{j-1}} \cdot \frac{\sum_{d\in C(a_{j-1})} R_d}{R_{a_{j-1}}}
= V^h_{a_{j-1}},
\end{align*}
a contradiction.
Hence, $\frac{V^h_{a_{j-1}}}{R_{a_{j-1}}} \ge \frac{V^h_{d}}{R_{d}}$ for some $d\in C(a_{j-1})$.
Since the algorithm chooses a node $a_j\in C(a_{j-1})$ with the smallest $\frac{V^h_{a_j}}{R_{a_j}}$, we have $\frac{V^h_{a_{j-1}}}{R_{a_{j-1}}} \ge \frac{V^h_{a_{j}}}{R_{a_{j}}}$.
It follows that $\frac{V^h_{a_{0}}}{R_{a_{0}}} \ge \frac{V^h_{a_{1}}}{R_{a_{1}}} \ge \dots \ge \frac{V^h_{a_{k}}}{R_{a_{k}}}$.
\end{proof}

\begin{theorem}
    \label{thm:adams-lower-quota}
    \Cref{algo:adams} satisfies upper quota and house monotonicity in multi-level apportionment.
\end{theorem}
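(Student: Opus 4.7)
The plan is to establish house monotonicity directly from the iterative structure of \Cref{algo:adams} and then to prove upper quota by pinpointing, for each node $i$, the iteration at which it received its last seat and invoking \Cref{lemma:adams} along the root-to-leaf path passing through any prescribed ancestor $a$.

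House monotonicity will be essentially immediate: \Cref{algo:adams} constructs $V^{h+1}$ by incrementing some of the entries of $V^h$ by $1$ and leaving the rest unchanged, so $V^{h+1}_i \ge V^h_i$ for every node~$i$.

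For upper quota, I would fix a node $i$ and an ancestor $a \in A(i)$. The case $V^h_i = 0$ is trivial, so I would assume $V^h_i \ge 1$ and let $h^*$ be the index of the last iteration at which $i$ received a seat, so that $V^{h^*}_i = V^h_i - 1$. Since \Cref{algo:adams} allocates a seat along a single root-to-leaf path that must pass through every ancestor of $i$, node $a$ also receives a seat at iteration $h^*$; house monotonicity then yields $V^{h^*}_a \le V^h_a - 1$. Applying \Cref{lemma:adams} to iteration~$h^*$ gives $V^{h^*}_a/R_a \ge V^{h^*}_i/R_i$, which combined with the two previous identities rearranges to
\[
\frac{V^h_a - 1}{R_a} \ge \frac{V^h_i - 1}{R_i}, \qquad \text{equivalently,} \qquad V^h_i \le \frac{R_i}{R_a} V^h_a + \Bigl(1 - \frac{R_i}{R_a}\Bigr).
\]
Because $R_i/R_a \in (0, 1]$, the additive correction is strictly less than $1$, so $V^h_i < (R_i/R_a) V^h_a + 1$, and integrality of $V^h_i$ will then force $V^h_i \le \lceil (R_i/R_a) V^h_a \rceil$, the desired upper quota bound.

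The main conceptual step is the observation that whenever $i$ receives a seat, every ancestor of $i$, including $a$, receives one too, which is what ties $V^h_a$ and $V^h_i$ together via \Cref{lemma:adams}. The subtlety I would need to handle carefully is that $a$ can gain additional seats after step $h^*$ through descendants outside $i$'s subtree, but this turns out to be benign: house monotonicity absorbs it into the slack $V^{h^*}_a \le V^h_a - 1$, and the strict positivity of $R_i/R_a$ is exactly what lets me avoid a separate rounding-case analysis at the end.
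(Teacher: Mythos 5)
Your proof is correct, but it takes a genuinely different route from the paper's. The paper establishes upper quota by induction on $h$: assuming the allocation of $h$ seats fulfills upper quota, it notes that after the $(h+1)$-th seat only pairs $(a_j, a_\ell)$ lying on the newly incremented root-to-leaf path can newly violate the bound, and then applies \Cref{lemma:adams} to that path, finishing with the same algebra you use (from $\frac{V_a - 1}{R_a} \ge \frac{V_i - 1}{R_i}$ and $0 < R_i/R_a \le 1$, conclude $V_i < \frac{R_i}{R_a}V_a + 1$ and hence $V_i \le \lceil \frac{R_i}{R_a}V_a\rceil$ by integrality). You instead argue directly, with no induction: fixing a node $i$ with $V^h_i \ge 1$ and an ancestor $a$, you anchor the analysis at the last iteration $h^*$ in which $i$ received a seat, observe that $a$ lies on that seat's path and so received a seat in the same iteration, apply the chained form of \Cref{lemma:adams} at that moment, and absorb any seats $a$ gains after $h^*$ into the slack $V^h_a - 1$ via house monotonicity. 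This trades the paper's inductive bookkeeping (verifying that off-path pairs remain compliant) for the single observation that subsequent gains by the ancestor only loosen the constraint, which is arguably cleaner; both proofs rest on the same engine, namely the ratio monotonicity of \Cref{lemma:adams} along the allocated path. One cosmetic caveat: under the paper's convention $V^t$ denotes the allocation after $t$ seats, so your pre-increment quantities should strictly be written $V^{h^*-1}$ (e.g., $V^{h^*-1}_i = V^h_i - 1$ and $V^{h^*-1}_a \le V^h_a - 1$); since you state your convention explicitly and use it consistently---and it matches how the paper itself invokes \Cref{lemma:adams}, with pre-increment ratios $V^h$ for the $(h+1)$-th seat---this is a notation slip, not a gap.
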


\begin{proof}
House monotonicity is clear from the iterative nature of the algorithm.
We prove by induction on $h$ that, for any $h\ge 0$, the allocation of $h$ seats fulfills upper quota.
The base case $h = 0$ is trivial.
Assume that $h$ seats have been allocated for some $h\ge 0$, and consider the allocation of the $(h+1)$-th seat.
Let the nodes that receive this seat be $a_0,a_1,\dots,a_k$ from top to bottom, where $a_0 = 0$.
It follows from \Cref{lemma:adams} that $\frac{V^h_{a_{j-1}}}{R_{a_{j-1}}} \ge \frac{V^h_{a_{j}}}{R_{a_{j}}}$ for each $j\in \{1,2,\dots,k\}$.

By the induction hypothesis, after the $(h+1)$-th seat is allocated, the only potential upper quota violations are those for some node $a_j$ with respect to its ancestor $a_\ell$, where $\ell < j$.
From the previous paragraph, we know that $\frac{V^h_{a_{\ell}}}{R_{a_{\ell}}} \ge \frac{V^h_{a_{j}}}{R_{a_{j}}}$.
Since $V^{h+1}_{a_\ell} = V^h_{a_\ell} + 1$ and $V^{h+1}_{a_j} = V^h_{a_j} + 1$, we have $\frac{V^{h+1}_{a_{\ell}} - 1}{R_{a_{\ell}}} \ge \frac{V^{h+1}_{a_{j}} - 1}{R_{a_{j}}}$, which implies that $V^{h+1}_{a_j} \le \frac{R_{a_j}}{R_{a_\ell}}\cdot V^{h+1}_{a_\ell} + 1 - \frac{R_{a_j}}{R_{a_\ell}} < \frac{R_{a_j}}{R_{a_\ell}}\cdot V^{h+1}_{a_\ell} + 1$.
It follows that $V^{h+1}_{a_j} \le \left\lceil \frac{R_{a_j}}{R_{a_\ell}}\cdot V^{h+1}_{a_\ell} \right\rceil$, which means that upper quota for $a_j$ with respect to $a_\ell$ is satisfied upon the allocation of the $(h+1)$-th seat, completing the proof.
\end{proof}

We now turn our attention to Jefferson's method, whose multi-level version is described as \Cref{algo:jefferson}.
The only difference from \Cref{algo:adams} is in \Cref{line:jefferson-ratio}, where the ratio $\frac{V_c^h+1}{R_c}$ is used instead of~$\frac{V_c^h}{R_c}$.
Again, we first establish an auxiliary lemma.

\begin{algorithm}[t]
    \caption{Multi-Level Jefferson's Method}
    \label{algo:jefferson}

    \DontPrintSemicolon
    \SetKwInOut{Input}{Input}
    \SetKwInOut{Output}{Output}
    \SetArgSty{textnormal}

    \Input {List $V^h$ representing the allocation for $h$ seats}
    \Output {List $V^{h+1}$ representing the allocation for $h+1$ seats}

    $V^{h+1} \gets V^h$ \hspace{5mm} \tcp{initialize $V^{h+1}$ to be $V^h$}
    $i \gets 0$ \hspace{5mm} \tcp{root node}
    $V^{h+1}_i \gets V^{h}_i + 1$ \hspace{5mm} \tcp{root node always receives the new seat}
    \While {$i \text{ is not a leaf node}$} {
        Let $c$ be a child of $i$ with the smallest $\frac{V^h_c + 1}{R_c}$, breaking ties arbitrarily\; \label{line:jefferson-ratio}
        $V^{h+1}_c \gets V^{h}_c + 1$\;
        $i \gets c$\;
    }
\end{algorithm}

\begin{lemma}
\label{lemma:jefferson-lq}
Let $h\ge 0$.
After $h$ seats have been allocated by \Cref{algo:jefferson}, for any two nodes $p$ and $c$ such that $c\in C(p)$, it holds that $\frac{V^h_c + 1}{R_c} \geq \frac{V^h_p + 1}{R_p}$.
\end{lemma}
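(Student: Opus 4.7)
The plan is to reduce the multi-level claim to a single-level property of Jefferson's method, bypassing a case analysis on how the $(h+1)$-th seat propagates down the tree. The key observation about \Cref{algo:jefferson} is that, for any internal node $p$, the allocation among $p$'s children follows the single-level Jefferson rule with entitlements $(R_c)_{c \in C(p)}$ summing to $R_p$. Indeed, $p$ lies on the allocation path on exactly $V^h_p$ of the first $h$ rounds, and on each such round the algorithm picks the child $c$ minimizing $\frac{V_c + 1}{R_c}$---which is precisely Jefferson's selection rule. Hence the tuple $(V^h_c)_{c \in C(p)}$ equals the single-level Jefferson allocation of $V^h_p$ seats with those entitlements.

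It therefore suffices to prove the following single-level statement: after $t$ seats have been allocated by Jefferson's method among parties with entitlements $R_{c_1}, \ldots, R_{c_m}$ summing to $R_p$, every party $c$ satisfies $\frac{V^{(t)}_c + 1}{R_c} \geq \frac{t+1}{R_p}$. Applying this with $t = V^h_p$ immediately yields the lemma, since $V^{(V^h_p)}_c = V^h_c$ for each child $c$ of $p$.

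To prove the single-level statement, I would use the equivalent description of Jefferson's method as a ranking of divisor values. Consider the multi-set $S = \{k / R_{c_i} : 1 \leq i \leq m, k \geq 1\}$ ordered non-decreasingly; Jefferson's rule assigns the $s$-th seat to the party whose value $k / R_{c_i}$ occupies position $s$. Consequently, $\min_c \frac{V^{(t)}_c + 1}{R_c}$ is the $(t+1)$-th smallest element of $S$. To lower-bound this quantity by $\frac{t+1}{R_p}$, I would count the elements of $S$ strictly below $\frac{t+1}{R_p}$: for each $i$, the number of positive integers $k$ with $k < (t+1)R_{c_i}/R_p$ equals $\lceil (t+1)R_{c_i}/R_p \rceil - 1$, which is strictly less than $(t+1)R_{c_i}/R_p$. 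Summing over $i$ gives strictly less than $\sum_i (t+1) R_{c_i} / R_p = t + 1$, hence at most $t$. Thus the $(t+1)$-th smallest element of $S$ is at least $\frac{t+1}{R_p}$, as needed.

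I expect the main obstacle to be justifying the ranking description of Jefferson's method carefully, particularly when several parties attain the minimum $\frac{V_c + 1}{R_c}$ simultaneously---a tie-breaking convention matching the algorithm's arbitrary choices is needed to make the correspondence between seat assignments and the sorted enumeration of $S$ well-defined. Once this is in place, the counting step is a short calculation, and the lemma follows.
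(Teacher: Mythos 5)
Your proof is correct, and it takes a genuinely different route from the paper's. The paper argues locally and by contradiction: it first shows that whenever a child $c$ has just received a seat, $\frac{V_c}{R_c} \ge \frac{V_p}{R_p}$ (via a weighted-average argument over $C(p)$ plus an exchange argument---the last seat given to an over-allocated sibling $c^*$ should have gone to $c$), and then propagates this to all times $h$ by examining the next round in which $c$ receives a seat. You instead observe that the allocation among $C(p)$ is exactly a single-level Jefferson (D'Hondt) run on $V^h_p$ seats---this reduction is valid, since the children's allocations change only in rounds where $p$ lies on the path, and the selection rule in \Cref{algo:jefferson} depends only on those allocations---and then prove the quantitative single-level bound $\min_c \frac{V_c+1}{R_c} \ge \frac{t+1}{R_p}$ via the sorted-divisor characterization of highest-averages methods together with a counting argument; your count $\lceil (t+1)R_{c_i}/R_p\rceil - 1 < (t+1)R_{c_i}/R_p$ and the normalization $\sum_i R_{c_i} = R_p$ are exactly right. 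Your tie-breaking worry is smaller than you suggest: for any tie-breaking, each round consumes a minimum remaining element of the multiset $S$ (each party's values are increasing in $k$, so the globally smallest unconsumed value is always some party's next value), hence the $(t+1)$-th smallest \emph{value} of $S$ equals $\min_c \frac{V_c+1}{R_c}$ regardless of how ties are resolved. Your approach buys modularity and sidesteps a subtlety in the paper's argument, which implicitly assumes $c$ receives a next seat at some round $h' > h$ (justifiable, since every node receives infinitely many seats as $h \to \infty$, but left unstated). What the paper's exchange-style argument buys is robustness: its structure is reused almost verbatim for the multi-level quota method (\Cref{lemma:quota1-lq}), where the eligibility check destroys the pure divisor-ranking structure, so your ranking-based technique would not extend there.
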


\begin{proof}
First, we claim that whenever $c$ has just received a seat---say, the $g$-th seat---it holds that $\frac{V^g_c}{R_c} \ge \frac{V^g_p}{R_p}$.
Suppose for contradiction that $\frac{V^g_c}{R_c} < \frac{V^g_p}{R_p}$.
We know that
\begin{align*}
\frac{V^g_p}{R_p} 
= \sum_{c'\in C(p)} \frac{V^g_{c'}}{R_p}
= \sum_{c'\in C(p)} \frac{V^g_{c'}}{R_{c'}}\cdot \frac{R_{c'}}{R_p}
= \sum_{c'\in C(p)} \frac{V^g_{c'}}{R_{c'}}\cdot W_{c'}.
\end{align*}
Since $\sum_{c'\in C(p)} W_{c'} = 1$ and $\frac{V^g_c}{R_c} < \frac{V^g_p}{R_p}$, there exists $c^*\in C(p)$ such that $\frac{V^g_{c^*}}{R_{c^*}} > \frac{V^g_p}{R_p} > \frac{V^g_c}{R_c}$.
Hence, just before $c$ received this seat, $\frac{V^{g-1}_{c^*}}{R_{c^*}} > \frac{V^{g-1}_c+1}{R_c}$.
However, this means that the last seat allocated to $c^*$ before this point should have been allocated to $c$ instead, a contradiction that establishes the claim.

We now proceed to prove the lemma.
Assume for contradiction that for some $h$, after $h$ seats have been allocated, $\frac{V^h_c + 1}{R_c} < \frac{V^h_p + 1}{R_p}$.
Let $h' > h$ be the round in which $c$ receives the next seat; it holds that $\frac{V^{h'-1}_c + 1}{R_c} = \frac{V^{h}_c + 1}{R_c} < \frac{V^h_p + 1}{R_p} \le \frac{V^{h'-1}_p + 1}{R_p}$.
Hence, after the $h'$-th seat has been allocated to $c$, we have $\frac{V^{h'}_c}{R_c} < \frac{V^{h'}_p}{R_p}$.
However, this contradicts the claim in the previous paragraph, completing the proof.
\end{proof}

\begin{theorem}
\label{thm:jefferson-lower-quota}
    \Cref{algo:jefferson} satisfies lower quota and house monotonicity in multi-level apportionment.
\end{theorem}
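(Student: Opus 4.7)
The plan is to obtain house monotonicity for free from the iterative structure of \Cref{algo:jefferson} (the allocation $V^{h+1}$ is initialized as $V^h$ before any modification, so no entry ever decreases), and to reduce the lower quota guarantee directly to the inequality already packaged in \Cref{lemma:jefferson-lq}. Concretely, I would iterate that parent--child inequality along a descending path.

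For any node $i$ and any ancestor $a \in A(i)$, let $a = a_0, a_1, \dots, a_k = i$ be the path from $a$ to $i$, with $a_{j+1}\in C(a_j)$. Applying \Cref{lemma:jefferson-lq} at each step and chaining the resulting inequalities yields
\[
\frac{V^h_i + 1}{R_i} \;\ge\; \frac{V^h_a + 1}{R_a}.
\]
Rearranging gives $V^h_i + 1 \ge \frac{R_i}{R_a}(V^h_a + 1)$, and since $R_i/R_a > 0$ this implies the strict inequality $V^h_i + 1 > \frac{R_i}{R_a}\cdot V^h_a$, i.e., $V^h_i > \frac{R_i}{R_a}\cdot V^h_a - 1$. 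Because $V^h_i$ is an integer, this forces $V^h_i \ge \lfloor \frac{R_i}{R_a}\cdot V^h_a \rfloor$. Taking the maximum over $a\in A(i)$ gives $V^h_i \ge LQ_i$, as required.

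I do not expect a genuine obstacle here: the induction on allocation rounds has effectively been absorbed into \Cref{lemma:jefferson-lq}, so the remaining argument is algebraic. The only point requiring mild care is the passage from the real inequality $V^h_i + 1 \ge \frac{R_i}{R_a}(V^h_a + 1)$ to the floor bound, which relies on the strict positivity of $R_i/R_a$ and on integrality of $V^h_i$ to round down to $\lfloor \frac{R_i}{R_a}\cdot V^h_a \rfloor$ rather than something weaker. No induction on $h$ is needed in the main proof itself, in contrast to the proof of \Cref{thm:adams-lower-quota}.
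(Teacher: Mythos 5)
Your proposal is correct and follows essentially the same route as the paper's own proof: house monotonicity from the iterative structure, chaining \Cref{lemma:jefferson-lq} along the path from an ancestor to a descendant, and then the same strict-inequality step (using $R_i/R_a > 0$ and integrality of $V^h_i$) to conclude $V^h_i \ge \bigl\lfloor \frac{R_i}{R_a}\cdot V^h_a \bigr\rfloor$. Your observation that no fresh induction on $h$ is needed because it is absorbed into the lemma also matches the paper's presentation.
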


\begin{proof}
House monotonicity is clear from the iterative nature of the algorithm.
For lower quota, consider any pair of nodes $a,b$ such that $a$ is an ancestor of $b$.
By applying \Cref{lemma:jefferson-lq} repeatedly along the path between $a$ and $b$, we have that at any point during the execution of the algorithm, it holds that $\frac{V_b+1}{R_b} \ge \frac{V_a+1}{R_a}$.
This means that $V_b \ge \frac{R_b}{R_a}\cdot V_a - 1 + \frac{R_b}{R_a} > \frac{R_b}{R_a}\cdot V_a - 1$, so $V_b \ge \left\lfloor \frac{R_b}{R_a}\cdot V_a \right\rfloor$.
Hence, lower quota for $b$ with respect to $a$ is satisfied.
\end{proof}

\section{Applying the Quota Method}
\label{sec:quota-method}

Given the results in \Cref{sec:adams-jefferson}, a natural next approach is to take an iterative method that satisfies both upper and lower quota in single-level apportionment, and apply it level-by-level in the multi-level setting.
A well-known method with this property is the quota method \citep[Sec.~5]{BalinskiYo75}, whose multi-level version is presented as \Cref{algo:quota1}.
The algorithm is the same as multi-level Jefferson (\Cref{algo:jefferson}), except that there is a check against upper quota violations with respect to the parent of each node (\Cref{line:upper-quota-condition}).

We first show that the multi-level quota method satisfies lower quota.
The proof proceeds via a similar lemma as for multi-level Jefferson (\Cref{lemma:jefferson-lq}), but proving the lemma is now more involved.

\begin{lemma}
\label{lemma:quota1-lq}
Let $h\ge 0$.
After $h$ seats have been allocated by \Cref{algo:quota1}, for any two nodes $p$ and $c$ such that $c\in C(p)$, it holds that $\frac{V^h_c + 1}{R_c} \geq \frac{V^h_p + 1}{R_p}$.
\end{lemma}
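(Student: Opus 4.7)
The plan is to imitate the proof of \Cref{lemma:jefferson-lq}. The first step is to establish an auxiliary \textbf{Claim A}: \emph{whenever node $c$ has just received its $g$-th seat, $\frac{V^g_c}{R_c} \geq \frac{V^g_p}{R_p}$, where $p$ is the parent of $c$.} Given Claim A, the lemma follows exactly as in the Jefferson case: if the lemma were violated at some round $h$, then considering the next round $h' > h$ at which $c$ receives a seat would yield $\frac{V^{h'}_c}{R_c} < \frac{V^{h'}_p}{R_p}$, contradicting Claim A. (One still needs to verify that such an $h'$ exists; this uses the fact that $\phi_p$ grows unboundedly while $\phi_c$ is non-decreasing, together with the observation that if $c$ were always excluded by the upper-quota check from round $h$ onwards, the exclusion inequality itself would give $\phi_c \geq \phi_p$ at every round, contradicting the assumed violation.)

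To prove Claim A, I argue by contradiction as in \Cref{lemma:jefferson-lq}. Assume $\frac{V^g_c}{R_c} < \frac{V^g_p}{R_p}$. Writing $\frac{V^g_p}{R_p} = \sum_{c' \in C(p)} W_{c'} \frac{V^g_{c'}}{R_{c'}}$ as a weighted average, some sibling $c^*$ satisfies $\frac{V^g_{c^*}}{R_{c^*}} > \frac{V^g_p}{R_p} > \frac{V^g_c}{R_c}$. Let $h^* \leq g-1$ be the last round at which $c^*$ received a seat, so that $V^{h^*}_{c^*} = V^g_{c^*}$. Two sub-cases arise. In the \emph{eligible sub-case}, $c$ was eligible at round $h^*$; then the algorithm's selection of $c^*$ over $c$ forces $\frac{V^{h^*-1}_{c^*}+1}{R_{c^*}} \leq \frac{V^{h^*-1}_c+1}{R_c}$, and the Jefferson-style chain of inequalities yields the direct contradiction $\frac{V^g_{c^*}}{R_{c^*}} \leq \frac{V^g_c}{R_c}$.

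The genuinely new case is the \emph{excluded sub-case}, where $c$ was excluded by the upper-quota check at round $h^*$, i.e., $V^{h^*-1}_c + 1 > \lceil W_c V^{h^*}_p \rceil$. Since the quantities are integers, this gives $V^{h^*}_c = V^{h^*-1}_c \geq \lceil W_c V^{h^*}_p \rceil \geq W_c V^{h^*}_p$, so $\frac{V^{h^*}_c}{R_c} \geq \frac{V^{h^*}_p}{R_p}$ holds at the earlier round $h^*$. To propagate this forward to round $g$, I proceed by strong induction on $g$: listing the rounds $g_1 < g_2 < \cdots < g_k = g$ strictly after $h^*$ at which $c$ receives a seat, the inductive hypothesis gives $V^{g_j}_c \geq W_c V^{g_j}_p$ for every $j < k$. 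Combined with the easy fact that the multi-level quota method preserves upper quota with respect to parents (verified by induction on the round number using the algorithm's eligibility check) and the anchoring inequality at $h^*$, I intend to chain the bounds across the sub-intervals $(g_{j-1}, g_j)$ to deduce $V^g_c \geq W_c V^g_p$, the desired contradiction.

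The main obstacle is exactly this forward-propagation step in the excluded sub-case. The subtlety is that between consecutive $g_j$, $V_c$ stays constant while $V_p$ may grow, and $c$ may be either eligible-but-not-chosen or again excluded; bounding the growth of $V_p$ relative to $V_c$ across these sub-intervals requires a delicate interplay of the strong induction hypothesis, the upper-quota preservation, and a recursive invocation of the Jefferson-style argument at earlier rounds. I expect that once this bookkeeping is executed carefully, the contradiction $V^g_c \geq W_c V^g_p > V^g_c$ emerges as desired.
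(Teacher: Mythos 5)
Your reduction to Claim A and its use to derive the lemma match the paper exactly (the paper's proof explicitly finishes via the last paragraph of the proof of \Cref{lemma:jefferson-lq}), and your eligible sub-case is the standard Jefferson argument. But the excluded sub-case---the only place where the quota method genuinely differs from Jefferson---is left unproven, and this is a real gap, not deferred bookkeeping. Your plan anchors the inequality $V_c \ge W_c V_p$ at scattered rounds (the exclusion round $h^*$ and the earlier receipt rounds $g_j$) and hopes to chain it forward, but between two consecutive anchors $p$ can gain arbitrarily many seats: while $c$ is eligible-but-not-chosen, every new seat goes to a sibling $c'$ whose Jefferson ratio is at most $\frac{V_c+1}{R_c}$, and your anchors provide no lower bound on $V_{c'}$ at anchor time. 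A sibling sitting far below the parent average at your anchor (its deficit balanced by a high-ratio sibling such as $c^*$) can therefore absorb many seats in the window, so $W_c\bigl(V_p^{\mathrm{new}} - V_p^{\mathrm{anchor}}\bigr)$ is not bounded by the anchor slack plus the single seat $c$ receives at round $g$, and the desired conclusion $V^g_c \ge W_c V^g_p$ does not follow from the pieces you have assembled.

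The paper closes this case with a global accounting that avoids forward propagation entirely, and something equivalent is what you are missing. Assuming the bad event $\frac{V^h_c}{R_c} < \frac{V^h_p}{R_p}$ right after $c$ receives the $h$-th seat, it defines $I := \{i \in C(p) : \frac{V^h_i}{R_i} > \frac{V^h_p}{R_p}\}$ (non-empty, $c \notin I$), takes $g$ to be the \emph{last} round at which any member of $I$ receives a seat (say $i^* \in I$), and sets $J := \{j \in C(p) : V^h_j > V^g_j\}$, the children gaining seats afterwards (so $c \in J$ and $I \cap J = \emptyset$). The key point is that every $j \in J$ satisfies $\frac{V^g_j + 1}{R_j} \le \frac{V^h_j}{R_j} \le \frac{V^h_p}{R_p} < \frac{V^g_{i^*}}{R_{i^*}}$ at the single common round $g$, so each $j$ must have been ineligible there, giving the \emph{uniform} bound $\frac{V^g_j}{R_j} \ge \frac{V^g_p}{R_p}$ for all $j \in J$ simultaneously. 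Subtracting this from $\frac{V^h_j}{R_j} \le \frac{V^h_p}{R_p}$, summing over $J$, and using the conservation identity $\sum_{j \in J}(V^h_j - V^g_j) = V^h_p - V^g_p > 0$ yields $R_p \le \sum_{j \in J} R_j$, contradicting $\sum_{j \in J} R_j < R_p$ (which holds because $I$ is non-empty). It is precisely this simultaneous exclusion at one common anchor round---controlling all seat-gaining siblings at once rather than $c$ alone---that your per-node, per-interval anchors cannot supply. (Your side remark verifying that the next receipt round $h'$ exists in the derivation of the lemma from Claim A is sound and is a detail the paper glosses over.)
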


\begin{proof}
It suffices to prove that whenever $c$ has just received a seat, it holds that $\frac{V_c}{R_c} \ge \frac{V_p}{R_p}$. 
Once we have this claim, we can finish the proof of the lemma as in the last paragraph of the proof of \Cref{lemma:jefferson-lq}.

Suppose for contradiction that for some $h$, after the $h$-th seat has been allocated to $c$, it holds that $\frac{V^h_c}{R_c} < \frac{V^h_p}{R_p}$.
As in the proof of \Cref{lemma:jefferson-lq}, this means that there exists $c^*\in C(p)$ such that $\frac{V^h_{c^*}}{R_{c^*}} > \frac{V^h_p}{R_p} > \frac{V^h_c}{R_c}$.
Hence, the set 
$I := \{i\in C(p)\mid \frac{V^h_i}{R_i} > \frac{V^h_p}{R_p} \}$ is non-empty and $c\not\in I$.
Let $g\ge 0$ be the smallest integer such that $V^g_i = V^h_i$ for all $i\in I$.
Since $c$ receives the $h$-th seat, we have $g < h$.
Also, $g = 0$ would imply that $V^g_i = V^h_i = 0$ for all $i\in I$, contradicting the definition of $I$, so $g > 0$.
Let $J := \{j\in C(p)\mid V^h_j > V^g_j\}$; by definition, we know that $c\in J$ and $J\cap I = \emptyset$.
For each $j\in J$, since $j\not\in I$, we have $\frac{V^h_j}{R_j} \le \frac{V^h_p}{R_p}$.
Also, by definition of~$g$, there exists a node $i^*\in I$ that receives the $g$-th seat.
Thus, for each $j\in J$, we have
$\frac{V^g_{i^*}}{R_{i^*}} = \frac{V^h_{i^*}}{R_{i^*}} > \frac{V^h_p}{R_p} \ge \frac{V^h_j}{R_j} \ge \frac{V^g_j + 1}{R_j}$.
Since $j$ does not receive the $g$-th seat, it must be that receiving the $g$-th seat would violate the upper quota check for $j$, and so $\frac{V^g_j}{R_j} \ge \frac{V^g_p}{R_p}$.

\begin{algorithm}[t]
    \caption{Multi-Level Quota Method}
    \label{algo:quota1}
    
    \DontPrintSemicolon
    \SetKwInOut{Input}{Input}
    \SetKwInOut{Output}{Output}
    \SetArgSty{textnormal}

    \Input {List $V^h$ representing the allocation for $h$ seats}
    \Output {List $V^{h+1}$ representing the allocation for $h+1$ seats}

    $V^{h+1} \gets V^h$ \hspace{5mm} \tcp{initialize $V^{h+1}$ to be $V^h$}
    $i \gets 0$ \hspace{5mm} \tcp{root node}
    $V^{h+1}_i \gets V^{h}_i + 1$ \hspace{5mm} \tcp{root node always receives the new seat}
    \While {$i \text{ is not a leaf node}$} {
        Let $c$ be a child of $i$ with the smallest $\frac{V^h_c + 1}{R_c}$ subject to $\frac{V^h_c}{W_c} < V^h_i + 1$, breaking ties arbitrarily\; \label{line:upper-quota-condition}
        $V^{h+1}_c \gets V^{h}_c + 1$\;
        $i \gets c$\;
    }
\end{algorithm}

Summarizing what we have so far, we know that $\frac{V^h_j}{R_j} \leq \frac{V^h_p}{R_p}$ and $\frac{V^g_j}{R_j} \geq \frac{V^g_p}{R_p}$ for all $j\in J$.
Combining them yields $\frac{V^h_j - V^g_j}{R_j} \leq \frac{V^h_p - V^g_p}{R_p}$ for all $j \in J$, or equivalently, $V^h_j - V^g_j \le \frac{R_j}{R_p}(V^h_p - V^g_p)$.
Summing this up across all $j\in J$, we get $\sum_{j\in J}(V^h_j - V^g_j) \le \frac{\sum_{j\in J}R_j}{R_p}\cdot (V^h_p - V^g_p)$.
Since $\sum_{j\in J}(V^h_j - V^g_j) = V^h_p - V^g_p$ by definition of $J$, and $V^h_p - V^g_p > 0$ because $c$ receives the $h$-th seat, this simplifies to $R_p\le \sum_{j\in J}R_j$.
However, since $I$ is non-empty, we must have $\sum_{j\in J}R_j < R_p$, a contradiction.
\end{proof}

With \Cref{lemma:quota1-lq} in hand, we can establish lower quota in exactly the same way as for multi-level Jefferson (\Cref{thm:jefferson-lower-quota}).

\begin{theorem}
\label{thm:quota1-lower-quota}
    \Cref{algo:quota1} satisfies lower quota and house monotonicity in multi-level apportionment.
\end{theorem}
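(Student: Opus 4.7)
The plan is to follow the exact template of the proof of \Cref{thm:jefferson-lower-quota}, since the hypothesis of \Cref{lemma:quota1-lq} matches the conclusion of \Cref{lemma:jefferson-lq} verbatim (the parent-child inequality $\frac{V_c^h+1}{R_c}\ge \frac{V_p^h+1}{R_p}$ holds at every stage of the algorithm). House monotonicity is immediate: the algorithm only ever adds seats along a root-to-leaf path in each iteration, never removing any, so $V_i^{h+1}\ge V_i^h$ for every node $i$.

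For lower quota, I would fix an arbitrary pair of nodes $a,b$ with $a\in A(b)$, and walk the unique path $a=p_0,p_1,\dots,p_k=b$ from $a$ down to $b$. Applying \Cref{lemma:quota1-lq} to each consecutive pair $(p_{j-1},p_j)$ yields $\frac{V_{p_j}+1}{R_{p_j}}\ge \frac{V_{p_{j-1}}+1}{R_{p_{j-1}}}$, and chaining these inequalities gives $\frac{V_b+1}{R_b}\ge \frac{V_a+1}{R_a}$ at every point in time. Multiplying through by $R_b$ and rearranging gives $V_b \ge \frac{R_b}{R_a} V_a - 1 + \frac{R_b}{R_a} > \frac{R_b}{R_a}V_a - 1$, so $V_b \ge \lfloor \frac{R_b}{R_a}V_a\rfloor$, which is exactly lower quota for $b$ with respect to $a$. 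Taking the maximum over all ancestors $a\in A(b)$ then shows $V_b\ge LQ_b$.

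There is no genuine obstacle here: the heavy lifting was done in \Cref{lemma:quota1-lq}, whose proof had to contend with the upper-quota side check that distinguishes the quota method from Jefferson's method. Once that lemma is available, the deduction of lower quota is a formal rerun of the Jefferson argument, and house monotonicity is structural. I would therefore keep the write-up short, mirroring the final paragraph of the proof of \Cref{thm:jefferson-lower-quota} and simply citing \Cref{lemma:quota1-lq} in place of \Cref{lemma:jefferson-lq}.
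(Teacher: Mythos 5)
Your proposal is correct and matches the paper exactly: the paper likewise states that, with \Cref{lemma:quota1-lq} in hand, lower quota follows in exactly the same way as in the proof of \Cref{thm:jefferson-lower-quota}, i.e., by chaining the parent--child inequality along the ancestor path and using integrality to obtain $V_b \ge \left\lfloor \frac{R_b}{R_a}\cdot V_a \right\rfloor$, with house monotonicity immediate from the iterative structure. Nothing is missing, and your instinct to keep the write-up short is the right one.
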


Unfortunately, even though the algorithm has a check against upper quota violations with respect to the parent, this is insufficient to guarantee upper quota in the multi-level setting.

\begin{proposition}
\label{prop:quota1-upper-quota}
\Cref{algo:quota1} fails upper quota in multi-level apportionment.
\end{proposition}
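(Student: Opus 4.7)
The plan is to construct an explicit counterexample on which \Cref{algo:quota1} violates upper quota at a leaf. Consider the three-level tree with root $0$, children $A$ and $B$ of the root, and children $c_1$ and $c_2$ of $A$ (so that $B$, $c_1$, and $c_2$ are leaves). Fix a parameter $p$ with $\frac{9}{10} < p < \frac{2\sqrt{2}}{3}$; for concreteness take $p = \frac{91}{100}$. Set $W_A = W_{c_1} = p$ and $W_B = W_{c_2} = 1-p$, and allocate $h = 9$ seats. Then $R_{c_1} = p^2$, and since $9p^2 < 8$, we have $UQ_{c_1} \le \lceil R_{c_1}\cdot h\rceil \le 8$.

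Next, I would prove by induction on $t \in \{0, 1, \dots, 9\}$ that after the first $t$ seats have been allocated, the algorithm has routed every one of them along the path root $\to A \to c_1$; equivalently, $V^t_0 = V^t_A = V^t_{c_1} = t$ and $V^t_B = V^t_{c_2} = 0$. The base case $t = 0$ is trivial, and in each inductive step I need to verify two inequalities at each of the two non-leaf levels. At the root, the greedy rule picks $A$ over $B$ iff $(V^t_A+1)/R_A < (V^t_B+1)/R_B$, which under the inductive hypothesis reduces to $t < (2p-1)/(1-p)$; this holds for all $t \le 8$ because $p > 9/10$. The parent upper-quota guard $V^t_A/W_A < V^t_0 + 1$ reduces to $t < p/(1-p)$, which is strictly weaker. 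Within the subtree rooted at $A$, the identical pair of inequalities reappears with $A, c_1, c_2$ playing the roles of root, $A, B$, so both conditions hold there as well.

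It follows that $V_{c_1} = 9 > 8 \ge UQ_{c_1}$, establishing the proposition. The main hurdle is pinning down a window for $p$ in which the algorithm consistently routes new seats through $c_1$ without being blocked by the parent upper-quota guard, while simultaneously $9p^2 < 8$ so that $c_1$'s root-level quota stays strictly below $9$; the interval $(\frac{9}{10}, \frac{2\sqrt{2}}{3})$ meets both requirements, and the remainder is routine arithmetic.
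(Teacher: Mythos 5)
Your proposal is correct and takes essentially the same approach as the paper: the paper's proof uses the same tree topology with concrete weights ($8/9$ on the heavy edge at each of the two levels, $h=5$ seats) and the identical mechanism, namely that the single-level guard $\frac{V_c}{W_c} < V_i + 1$ never blocks the heavy chain while the two-level quota $\lceil h \cdot R_{c_1} \rceil$ is exceeded at the bottom leaf. Your parametric instantiation with $p = \tfrac{91}{100}$ and $h = 9$ checks out, with all inequalities strict so that arbitrary tie-breaking cannot rescue the algorithm.
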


\begin{proof}
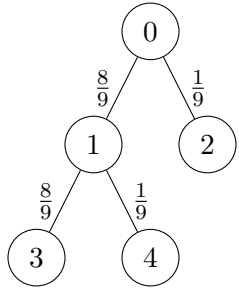
\begin{figure}[t]
\centering
\begin{tikzpicture}[
    node/.style={circle, draw, minimum size=0.75cm},
    edge from parent/.style={draw},
    level 1/.style={sibling distance=1.5cm, level distance=1.5cm},
    level 2/.style={sibling distance=1.5cm, level distance=1.5cm},
  ]

  % Tree structure
  \node[node] {$0$}
    child {
      node[node] {$1$}
      child { 
        node[node] {$3$}
        edge from parent node[left] {$\frac{8}{9}$}
      }
      child {
        node[node] {$4$}
        edge from parent node[right] {$\frac{1}{9}$}
      }
      edge from parent node[left] {$\frac{8}{9}$}
    }
    child {
      node[node] {$2$}
      edge from parent node[right] {$\frac{1}{9}$}
    };
\end{tikzpicture}
\caption{Example tree structure for which \Cref{algo:quota1} fails upper quota.}
\label{fig:quota1-uq-tree}
\end{figure}

Consider the instance where the tree structure is as shown in \Cref{fig:quota1-uq-tree}, and assume that there are $h = 5$ seats to be allocated.
Since $\frac{4+1}{8/9} < \frac{0+1}{1/9}$ and $\frac{4}{8/9} < 4+1$, the algorithm allocates all five seats to node~$1$.
By the same reasoning, the algorithm allocates all five seats to node~$3$.
However, since $\lceil 5\cdot\frac{8}{9}\cdot\frac{8}{9}\rceil = 4$, upper quota for node~$3$ with respect to node~$0$ is violated.
\end{proof}

\begin{algorithm}[h]
    \caption{Upper-Compliant Multi-Level Quota Method}
    \label{algo:quota2}

    \DontPrintSemicolon
    \SetKwInOut{Input}{Input}
    \SetKwInOut{Output}{Output}
    \SetArgSty{textnormal}

    \Input {List $V^h$ representing the allocation for $h$ seats}
    \Output {List $V^{h+1}$ representing the allocation for $h+1$ seats}

    $V^{h+1} \gets V^h$ \hspace{5mm} \tcp{initialize $V^{h+1}$ to be $V^h$}
    $i \gets 0$ \hspace{5mm} \tcp{root node}
    $V^{h+1}_i \gets V^{h}_i + 1$ \hspace{5mm} \tcp{root node always receives the new seat}
    \tcc{Threshold value to prevent upper quota violation. When considering a node~$c$, we have $t = \min_{a \in A(c)} \frac{V^h_a + 1}{R_a}$. }
    $t \gets V^h_0 + 1$ \hspace{5mm} \tcp{Note that, by definition, $R_0 = 1$.}
    \While {$i \text{ is not a leaf node}$} {
        Let $c$ be a child of $i$ with the smallest $\frac{V^h_c + 1}{R_c}$ subject to $\frac{V^h_c}{R_c} < t$, breaking ties arbitrarily\;
        $V^{h+1}_c \gets V^{h}_c + 1$\;
        $i \gets c$\;
        $t \gets \min (t, \frac{V^h_{c} + 1}{R_{c}})$\;
    }
\end{algorithm}

To fix the failure in \Cref{prop:quota1-upper-quota}, one could try to prevent upper quota violations of a node not only with respect to its parent, but also with respect to all of its ancestors.
The resulting algorithm is shown as \Cref{algo:quota2}.
In order for a node~$c$ to be eligible to receive the next seat, the algorithm requires that $\frac{V_c}{R_c} < \frac{V_a+1}{R_a}$ for all $a\in A(c)$.
Unlike the previous three algorithms, it is not trivial to see that \Cref{algo:quota2} is well-defined---that is, if a non-leaf node~$i$ receives a seat, then at least one of its children is also eligible to receive the seat.
We establish this fact in the following proposition.

\begin{proposition}
\Cref{algo:quota2} is well-defined.
\end{proposition}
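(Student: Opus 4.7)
The plan is to fix a non-leaf node $i$ that just received the new seat during the descent, and to show that some child $c \in C(i)$ satisfies the eligibility condition $\frac{V^h_c}{R_c} < t$, where $t$ denotes the threshold at the moment the algorithm searches among $i$'s children. The key identity I would use is that $R_c = R_i W_c$ together with $\sum_{c \in C(i)} W_c = 1$, which gives $\sum_{c \in C(i)} R_c = R_i$. Consequently, if $\frac{V^h_c}{R_c} \ge t$ held for every $c \in C(i)$, then multiplying by $R_c$ and summing would yield $V^h_i = \sum_{c \in C(i)} V^h_c \ge t R_i$, i.e., $\frac{V^h_i}{R_i} \ge t$. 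My goal is to show that this averaged inequality is incompatible with how $t$ and $V^h_i$ are tied together by the algorithm.

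To derive the contradiction, I would split on how $t$ was produced. If $i$ is the root, then by the initialization step $t = V^h_0 + 1 = \frac{V^h_i+1}{R_i}$, so the averaged inequality becomes $V^h_i \ge V^h_i + 1$, an immediate contradiction. Otherwise, let $p$ be the parent of $i$ and $t_p$ the threshold that was in effect when $i$ was selected from among $p$'s children; the algorithm then sets $t = \min\bigl(t_p,\tfrac{V^h_i+1}{R_i}\bigr)$. In the sub-case $t = \frac{V^h_i+1}{R_i}$, the same arithmetic again forces $V^h_i \ge V^h_i + 1$, a contradiction. In the remaining sub-case $t = t_p < \frac{V^h_i+1}{R_i}$, the fact that $i$ was chosen by the previous iteration required $\frac{V^h_i}{R_i} < t_p = t$, contradicting $\frac{V^h_i}{R_i} \ge t$.

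The main obstacle I anticipate is little more than bookkeeping: I must correctly identify what $t$ represents at the moment the loop inspects $i$'s children, and track the invariants $t \le \frac{V^h_i+1}{R_i}$ and $\frac{V^h_i}{R_i} < t_p$ (the latter certifying $i$'s own prior selection). Once these relationships are in hand, the averaging argument driven by $\sum_{c \in C(i)} R_c = R_i$ does the real work, and every sub-case collapses to the same contradiction, so some child $c$ of $i$ must indeed be eligible and the descent can always continue to a leaf.
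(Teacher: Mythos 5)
Your proof is correct and takes essentially the same approach as the paper: assuming every child of $i$ fails the eligibility check, you average using $\sum_{c\in C(i)} R_c = R_i$ to get $\frac{V^h_i}{R_i} \ge t$, then derive a contradiction either from $V^h_i \ge V^h_i + 1$ (when $t = \frac{V^h_i+1}{R_i}$) or from $i$'s own eligibility when it was selected (when $t$ is inherited from an ancestor). The paper organizes the case split before the averaging rather than after, but the two arguments are mathematically identical.
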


\begin{proof}
Suppose for contradiction that a node~$i$ receives a seat but none of its children is eligible to receive the seat.
This means that $\frac{V_c}{R_c} \ge \min_{a\in A(i)\cup\{i\}} \frac{V_a+1}{R_a}$ for all $c\in C(i)$.
If $\min_{a\in A(i)\cup\{i\}} \frac{V_a+1}{R_a} = \frac{V_i+1}{R_i}$, we would have 
\begin{align}
\frac{V_i}{R_i} 
= \sum_{c\in C(i)}\frac{V_c}{R_i} 
= \sum_{c\in C(i)}\frac{V_c}{R_c}\cdot W_c 
\ge \frac{V_i+1}{R_i}\cdot \sum_{c\in C(i)}W_c
= \frac{V_i+1}{R_i},
\label{eq:quota2}
\end{align}
a contradiction.
Hence, $\frac{V_c}{R_c} \ge \min_{a\in A(i)} \frac{V_a+1}{R_a}$ for all $c\in C(i)$.
Using the same reasoning as in~(\ref{eq:quota2}), we get $\frac{V_i}{R_i} \ge \min_{a\in A(i)} \frac{V_a+1}{R_a}$.
However, this means that $i$ should not have received the seat, a contradiction.
\end{proof}

Since \Cref{algo:quota2} prevents all possible upper quota violations when assigning each seat, upper quota is satisfied.
In addition, house monotonicity is fulfilled due to the iterative nature of the algorithm.

\begin{theorem}
\label{thm:quota2-upper-quota}
    \Cref{algo:quota2} satisfies upper quota and house monotonicity in multi-level apportionment.
\end{theorem}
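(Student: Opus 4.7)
The plan is to derive both claims from the invariant indicated by the algorithm's comment: whenever the descent reaches a non-leaf node~$i$ and the algorithm considers a child $c$ of~$i$, the threshold $t$ equals $\min_{a \in A(c)} \frac{V^h_a + 1}{R_a}$. This follows by a short induction along the descent path, since $t$ starts at $\frac{V^h_0 + 1}{R_0}$ before the root's children are considered, and each update $t \gets \min\bigl(t, \frac{V^h_c + 1}{R_c}\bigr)$ after giving the seat to~$c$ simply adjoins the newly visited ancestor to the running minimum.

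House monotonicity is immediate because the algorithm only performs increments. For upper quota I would proceed by induction on~$h$, with trivial base case $h = 0$. In the inductive step, let $0 = a_0, a_1, \dots, a_k$ be the path of nodes receiving the $(h+1)$-th seat. For any node~$c$ not on this path, $V^{h+1}_c = V^h_c$ while each ancestor satisfies $V^{h+1}_a \ge V^h_a$, so the inductive hypothesis already gives upper quota for $c$ with respect to~$a$. The essential case is $c = a_j$ with ancestor $a = a_\ell$ for $\ell < j$. Here the eligibility check that allowed $a_j$ to be selected yields $\frac{V^h_{a_j}}{R_{a_j}} < t \le \frac{V^h_{a_\ell} + 1}{R_{a_\ell}}$, which rearranges to $V^h_{a_j} < \frac{R_{a_j}}{R_{a_\ell}}\bigl(V^h_{a_\ell} + 1\bigr)$. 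Since $V^h_{a_j}$ is an integer, I conclude $V^{h+1}_{a_j} = V^h_{a_j} + 1 \le \bigl\lceil \frac{R_{a_j}}{R_{a_\ell}}(V^h_{a_\ell} + 1) \bigr\rceil = \bigl\lceil \frac{R_{a_j}}{R_{a_\ell}} V^{h+1}_{a_\ell} \bigr\rceil$, using that $a_\ell$ also receives the new seat (being an ancestor of $a_j$ on the descent path).

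I do not expect any of this to be technically hard; the main point to get right is the invariant describing~$t$, since without it one could only conclude upper quota with respect to the parent of~$a_j$, mirroring the failure mode of \Cref{algo:quota1} exhibited in \Cref{prop:quota1-upper-quota}. The small integer-rounding observation ``$x \in \mathbb{Z}$ and $x < y$ imply $x + 1 \le \lceil y \rceil$'' is worth stating explicitly, as it is what turns the strict algorithmic inequality into the integer upper-quota bound.
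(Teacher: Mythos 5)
Your proposal is correct and takes essentially the same route as the paper: the paper's justification for \Cref{thm:quota2-upper-quota} is precisely that the threshold $t$ equals $\min_{a \in A(c)} \frac{V^h_a + 1}{R_a}$ along the descent, so the eligibility check $\frac{V^h_c}{R_c} < t$ prevents every possible upper quota violation at the moment each seat is assigned, with house monotonicity following from the iterative structure. Your write-up merely makes explicit what the paper leaves as a one-line remark---the invariant on $t$, the induction on $h$, and the rounding step $x \in \mathbb{Z},\, x < y \Rightarrow x+1 \le \lceil y \rceil$---and it correctly (if implicitly) leans on the paper's separate well-definedness proposition to ensure the descent path reaches a leaf.
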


However, the upper quota check across multiple levels can cause lower quota to be violated.

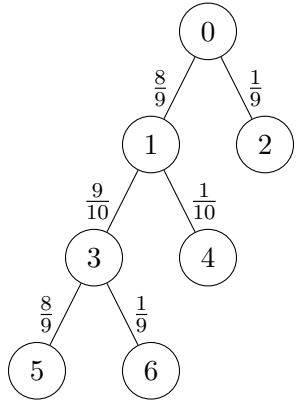
\begin{figure}[t]
\centering
\begin{tikzpicture}[
    node/.style={circle, draw, minimum size=0.75cm},
    edge from parent/.style={draw},
    level 1/.style={sibling distance=1.5cm, level distance=1.5cm},
    level 2/.style={sibling distance=1.5cm, level distance=1.5cm},
  ]

  % Tree structure
  \node[node] {$0$}
    child {
      node[node] {$1$}
      child { 
        node[node] {$3$}
        child { 
            node[node] {$5$}
            edge from parent node[left] {$\frac{8}{9}$}
          }
        child { 
            node[node] {$6$}
            edge from parent node[right] {$\frac{1}{9}$}
        }
        edge from parent node[left] {$\frac{9}{10}$}
      }
      child { 
        node[node] {$4$}
        edge from parent node[right] {$\frac{1}{10}$}
      }
      edge from parent node[left] {$\frac{8}{9}$}
    }
    child {
      node[node] {$2$}
      edge from parent node[right] {$\frac{1}{9}$}
    };
\end{tikzpicture}
\caption{Example tree structure where \Cref{algo:quota2} fails lower quota.}
\label{fig:quota2-lq-tree}
\end{figure}

\begin{proposition}
\label{prop:quota2-lower-quota}
\Cref{algo:quota2} fails lower quota in multi-level apportionment.
\end{proposition}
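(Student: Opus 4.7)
The plan is to prove the proposition by exhibiting an explicit counterexample on the tree in \Cref{fig:quota2-lq-tree}. After computing the root-relative entitlements (in particular $R_5 = 32/45$), I would run \Cref{algo:quota2} on this tree with $h = 17$ seats and track the state $V^0, V^1, \dots, V^{17}$ round by round, showing that the algorithm produces an allocation in which $V_5 < \lfloor 17 \cdot R_5 \rfloor = 12$. Since $LQ_5 \ge \lfloor h\cdot R_5 \rfloor$, this immediately yields a lower-quota violation for node~$5$ with respect to the root.

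The trace itself is the body of the proof. In most rounds the new seat descends the left spine $0 \to 1 \to 3 \to 5$, but in a handful of rounds the upper-quota condition $V_c^{h-1}/R_c < t$ disqualifies one of these children, forcing the seat to be diverted to a sibling (node~$2$, $4$, or $6$). I would verify, at every internal node on the descending path, which children remain eligible under the current threshold $t$ and which then minimizes $(V_c^{h-1}+1)/R_c$ among them. Two ties between the root's two children arise along the way, at rounds $8$ and $17$, and I would break both in favor of node~$2$. The tie at round~$8$ turns out to be inconsequential (both resolutions yield the same $V^9$), but the choice at round~$17$ is crucial: it prevents the seat that round from ever reaching node~$5$, leaving the final allocation $V^{17} = (17, 15, 2, 13, 2, 11, 2)$.

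A concluding one-line check $\lfloor 17 \cdot 32/45 \rfloor = 12 > 11 = V_5^{17}$ then completes the proof. In line with \Cref{foot:relaxation}, I would also observe that the deficit is invisible to the weaker parent-only lower-quota notion, since $\lfloor V_3^{17} \cdot R_5/R_3 \rfloor = \lfloor 13 \cdot 8/9 \rfloor = 11 = V_5^{17}$; only the stronger ancestor-based notion reveals the failure. The main obstacle is purely the bookkeeping required to carry the trace through $17$ rounds without error---in particular, confirming that the upper-quota block at node~$3$ (resp.\ at node~$1$) fires in precisely those rounds where $V_5/R_5$ (resp.\ $V_3/R_3$) has reached the running threshold $t$, rather than one round earlier or later.
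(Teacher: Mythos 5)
Your counterexample is correct: I re-ran \Cref{algo:quota2} on the tree of \Cref{fig:quota2-lq-tree} and your trace checks out---under your tie-breaks the algorithm indeed reaches $V^{17} = (17,15,2,13,2,11,2)$, the round-$8$ tie is indeed inconsequential (both resolutions give $V^9 = (9,8,1,7,1,6,1)$), and $V_5^{17} = 11 < 12 = \lfloor 17\cdot \tfrac{32}{45}\rfloor$, so lower quota fails. The paper proves the proposition on the \emph{same} tree but far more economically: it takes $h=5$, observes that the run coincides with the quota-method run from \Cref{prop:quota1-upper-quota} down to node~$1$, and then needs only two one-line eligibility checks (at nodes~$3$ and~$5$) to get $V^5 = (5,5,0,4,1,3,1)$, violating lower quota for node~$5$ with respect to node~$1$ since $\lfloor 5\cdot\tfrac{9}{10}\cdot\tfrac{8}{9}\rfloor = 4 > 3$. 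Crucially, the paper's run involves no ties at all, so the violation holds for \emph{every} tie-breaking, whereas your argument hinges on the round-$17$ tie being broken in favor of node~$2$ (breaking it toward node~$1$ yields $V_5^{17}=12$ and no violation at node~$5$). Since \Cref{line:upper-quota-condition}-style ties are "broken arbitrarily," an adversarial resolution is a legitimate execution and your proof is valid, but it is the more fragile construction and costs $17$ rounds of bookkeeping instead of $5$. What your version buys in exchange is a genuinely stronger conclusion: your violation is relative to the \emph{root}, so it shows \Cref{algo:quota2} fails even the weakened root-only lower quota notion of \Cref{foot:relaxation} (relevant to the open question closing \Cref{sec:quota-method}), whereas the paper's $h=5$ allocation satisfies root-only lower quota ($\lfloor 5\cdot\tfrac{32}{45}\rfloor = 3 = V_5$) and only the ancestor-based notion detects the failure there.
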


\begin{proof}
Consider the instance where the tree structure is as shown in \Cref{fig:quota2-lq-tree}, and assume that there are $h = 5$ seats to be allocated.
As in the proof of \Cref{prop:quota1-upper-quota}, the algorithm allocates all five seats to node~$1$.
Without the upper quota check, node~$3$ would receive all five seats.
However, because $\frac{3}{(8/9)\cdot (9/10)} < 3+1$ while $\frac{4}{(8/9)\cdot (9/10)} \ge 4+1$, it only receives the first four seats.
Similarly, because $\frac{2}{(8/9)\cdot (9/10)\cdot (8/9)} < 2+1$ while 
$\frac{3}{(8/9) \cdot (9/10)\cdot (8/9)} \ge 3+1$, among the four seats that node~$3$ receives, only the first three are passed on to node~$5$.
Since $\lfloor 5\cdot\frac{9}{10}\cdot\frac{8}{9}\rfloor = 4$, this means that lower quota for node~$5$ with respect to node~$1$ is violated.
\end{proof}

In light of our results, an interesting question is whether there exists a rule that satisfies both upper and lower quota as well as house monotonicity in multi-level apportionment, as the quota method does in single-level apportionment \citep[Thm.~3]{BalinskiYo75}. 
This question is open even when we relax our quota notions to require compliance only relative to the root node (cf.~\Cref{foot:relaxation}).

\section{Experimental Evaluation}
\label{sec:experiment}

To complement our theoretical results, we experimentally evaluate Adams', Jefferson's, Quota, and Upper-Compliant (UC) Quota methods (Algorithms~\ref{algo:adams}--\ref{algo:quota2}, respectively). 
In particular, we run experiments with trees of heights 3, 4, 5, and 6, using the following two tree types:

\begin{itemize}
    \item Full binary tree. Following our claim in \Cref{lemma:reduction}, an arbitrary instance can be reduced to a full binary tree instance without relaxing any constraints, thereby making full binary trees natural for evaluation.
    We run tests with $n \in \{15, 31, 63, 127\}$, which correspond to perfect binary trees. However, since Jefferson's and Quota methods can be shown to be equivalent when a node has only two children, we also consider another tree type, as explained next.
    \item Full $4$-ary tree. To keep the value of $n$ relatively small, we generate trees in the following manner: if we list the nodes at each level ordered from left to right, then except for the last level, nodes with even indices ($0$-based) have four children while the remaining nodes have zero children. We run tests with $n \in \{29, 61, 125, 253\}$; these values arise because the number of nodes at each level are $1$, $4$, $8$, $16$, $32$, $64$, and $128$, respectively.
\end{itemize}

For each combination of tree type and $n$-value, we generate $100{,}000$ instances with random entitlements. We assign entitlements as follows: for each non-leaf node, we generate a random array with length equal to the number of its children, where each entry is an integer between $1$ and $10$ (inclusive). The entitlement of each child equals its corresponding value divided by the sum of the array. This keeps the ratio of entitlement between any two sibling nodes at most $1:10$ and the assigned entitlements relatively small, which helps avoid precision errors when working with floating-point numbers.

We run the four algorithms for each generated instance and analyze the allocation results at $h \in \{100, 500\}$ with respect to the following four metrics:

\begin{itemize}
    \item Lower quota violation rate. This is calculated by dividing the number of nodes that violate their lower quota (as defined in \Cref{sec:prelim}) by the value of $n$, then averaging across all generated instances. The results are presented in Tables~\ref{table:lq_vio_binary}--\ref{table:lq_vio_4ary}, with Jefferson's and Quota methods omitted because \Cref{thm:jefferson-lower-quota,thm:quota1-lower-quota} show that they always satisfy lower quota.
    \item Upper quota violation rate. This is calculated by dividing the number of nodes that violate their upper quota (as defined in \Cref{sec:prelim}) by the value of $n$, then averaging across all generated instances. The results are presented in Tables~\ref{table:uq_vio_binary}--\ref{table:uq_vio_4ary}, with Adams' and Upper-Compliant Quota methods omitted because \Cref{thm:adams-lower-quota,thm:quota2-upper-quota} show that they always satisfy upper quota.
    \item Average deviation. This is calculated by averaging the absolute difference between the number of allocated seats and the strict quota\footnote{The \emph{strict quota} of a node~$i$ is defined as $R_i \cdot h$, where $R_i$ denotes the entitlement of node $i$ relative to the root (as defined in \Cref{sec:prelim}).} across all nodes and generated instances. The results are presented in Tables~\ref{table:avg_dev_quota_binary}--\ref{table:avg_dev_quota_4ary}.
    \item Maximum deviation. This is calculated by taking the maximum absolute difference between the number of allocated seats and the strict quota across all nodes and generated instances. The results are presented in Tables~\ref{table:max_dev_quota_binary}--\ref{table:max_dev_quota_4ary}.
\end{itemize}
Our code is available at \url{https://github.com/stevenwjy/multi-level-apportionment}.

From the results, we make the following observations:

\begin{itemize}
    \item As discussed earlier, Jefferson's and Quota methods yield identical results on binary trees.
    \item The violations and deviations tend to increase with $n$ for perfect binary trees, but decrease with~$n$ for $4$-ary trees. We hypothesize that the latter occurs because our construction makes $4$-ary trees sparser as the height increases.
    \item As shown in Tables~\ref{table:lq_vio_binary}--\ref{table:lq_vio_4ary}, the lower quota violation rate of the Upper-Compliant Quota method is substantially smaller than that of Adams' method.
    \item As shown in \Cref{table:uq_vio_4ary}, the upper quota violation rate of the Quota method is smaller than that of Jefferson's method in $4$-ary trees.
    \item As shown in Tables~\ref{table:avg_dev_quota_binary}--\ref{table:max_dev_quota_4ary}, the Upper-Compliant Quota method generally exhibits smaller average and maximum deviations than the other three algorithms.
\end{itemize}

% Lower Quota Violations - Binary Tree
\begin{table}[htbp]
\centering
\begin{tabular}{|l|cc|cc|cc|cc|}
\hline
\multirow{2}{*}{Method} & \multicolumn{2}{c|}{$n=15$} & \multicolumn{2}{c|}{$n=31$} & \multicolumn{2}{c|}{$n=63$} & \multicolumn{2}{c|}{$n=127$} \\
& $h=100$ & $h=500$ & $h=100$ & $h=500$ & $h=100$ & $h=500$ & $h=100$ & $h=500$ \\
\hline
Adams & 1.2747 & 1.0580 & 1.4816 & 1.3939 & 1.6183 & 1.5440 & 1.7694 & 1.6135 \\
UC Quota & 0.0120 & 0.0087 & 0.0300 & 0.0310 & 0.0502 & 0.0513 & 0.0609 & 0.0655 \\
\hline
\end{tabular}
\caption{Average lower quota violation rate (\% of nodes) for binary tree.}
\label{table:lq_vio_binary}
\end{table}

% Lower Quota Violations - $4$-ary Tree
\begin{table}[htbp]
\centering
\begin{tabular}{|l|cc|cc|cc|cc|}
\hline
\multirow{2}{*}{Method} & \multicolumn{2}{c|}{$n=29$} & \multicolumn{2}{c|}{$n=61$} & \multicolumn{2}{c|}{$n=125$} & \multicolumn{2}{c|}{$n=253$} \\
& $h=100$ & $h=500$ & $h=100$ & $h=500$ & $h=100$ & $h=500$ & $h=100$ & $h=500$ \\
\hline
Adams & 1.6681 & 1.4281 & 1.5177 & 1.5740 & 0.7814 & 1.4439 & 0.3863 & 0.8205 \\
UC Quota & 0.0001 & 0.0001 & 0.0000 & 0.0001 & 0.0000 & 0.0001 & 0.0000 & 0.0001 \\
\hline
\end{tabular}
\caption{Average lower quota violation rate (\% of nodes) for $4$-ary trees. Note that while some values for the Upper-Compliant Quota method appear as $0.0000$, they are non-zero but rounded to four decimal places.}
\label{table:lq_vio_4ary}
\end{table}

% Upper Quota Violations - Binary Tree
\begin{table}[htbp]
\centering
\begin{tabular}{|l|cc|cc|cc|cc|}
\hline
\multirow{2}{*}{Method} & \multicolumn{2}{c|}{$n=15$} & \multicolumn{2}{c|}{$n=31$} & \multicolumn{2}{c|}{$n=63$} & \multicolumn{2}{c|}{$n=127$} \\
& $h=100$ & $h=500$ & $h=100$ & $h=500$ & $h=100$ & $h=500$ & $h=100$ & $h=500$ \\
\hline
Jefferson & 0.9587 & 1.1367 & 1.3113 & 1.4119 & 1.4798 & 1.5552 & 1.4998 & 1.6120 \\
Quota & 0.9587 & 1.1367 & 1.3113 & 1.4119 & 1.4798 & 1.5552 & 1.4998 & 1.6120 \\
\hline
\end{tabular}
\caption{Average upper quota violation rate (\% of nodes) for binary tree.}
\label{table:uq_vio_binary}
\end{table}

% Upper Quota Violations - $4$-ary Tree
\begin{table}[htbp]
\centering
\begin{tabular}{|l|cc|cc|cc|cc|}
\hline
\multirow{2}{*}{Method} & \multicolumn{2}{c|}{$n=29$} & \multicolumn{2}{c|}{$n=61$} & \multicolumn{2}{c|}{$n=125$} & \multicolumn{2}{c|}{$n=253$} \\
& $h=100$ & $h=500$ & $h=100$ & $h=500$ & $h=100$ & $h=500$ & $h=100$ & $h=500$ \\
\hline
Jefferson & 1.5021 & 1.6556 & 1.1891 & 1.5936 & 0.6871 & 1.2565 & 0.3452 & 0.7668 \\
Quota & 0.4949 & 0.4806 & 0.4916 & 0.5660 & 0.2994 & 0.5209 & 0.1502 & 0.3387 \\
\hline
\end{tabular}
\caption{Average upper quota violation rate (\% of nodes) for $4$-ary tree.}
\label{table:uq_vio_4ary}
\end{table}

% Average Deviation - Binary Tree
\begin{table}[htbp]
\centering
\begin{tabular}{|l|cc|cc|cc|cc|}
\hline
\multirow{2}{*}{Method} & \multicolumn{2}{c|}{$n=15$} & \multicolumn{2}{c|}{$n=31$} & \multicolumn{2}{c|}{$n=63$} & \multicolumn{2}{c|}{$n=127$} \\
& $h=100$ & $h=500$ & $h=100$ & $h=500$ & $h=100$ & $h=500$ & $h=100$ & $h=500$ \\
\hline
Adams & 0.3009 & 0.2857 & 0.3251 & 0.3155 & 0.3457 & 0.3315 & 0.3683 & 0.3427 \\
Jefferson & 0.2850 & 0.2919 & 0.3156 & 0.3191 & 0.3296 & 0.3331 & 0.3288 & 0.3397 \\
Quota & 0.2850 & 0.2919 & 0.3156 & 0.3191 & 0.3296 & 0.3331 & 0.3288 & 0.3397 \\
UC Quota & 0.2628 & 0.2673 & 0.2846 & 0.2864 & 0.2955 & 0.2955 & 0.2983 & 0.3001 \\
\hline
\end{tabular}
\caption{Average deviation from strict quota for binary tree.}
\label{table:avg_dev_quota_binary}
\end{table}

% Average Deviation - $4$-ary Tree
\begin{table}[htbp]
\centering
\begin{tabular}{|l|cc|cc|cc|cc|}
\hline
\multirow{2}{*}{Method} & \multicolumn{2}{c|}{$n=29$} & \multicolumn{2}{c|}{$n=61$} & \multicolumn{2}{c|}{$n=125$} & \multicolumn{2}{c|}{$n=253$} \\
& $h=100$ & $h=500$ & $h=100$ & $h=500$ & $h=100$ & $h=500$ & $h=100$ & $h=500$ \\
\hline
Adams & 0.3468 & 0.3227 & 0.3581 & 0.3524 & 0.2757 & 0.3553 & 0.1807 & 0.2824 \\
Jefferson & 0.3216 & 0.3277 & 0.2896 & 0.3309 & 0.2078 & 0.2949 & 0.1249 & 0.2163 \\
Quota & 0.3083 & 0.3116 & 0.2815 & 0.3170 & 0.2038 & 0.2860 & 0.1230 & 0.2117 \\
UC Quota & 0.2945 & 0.2967 & 0.2712 & 0.3002 & 0.1991 & 0.2739 & 0.1208 & 0.2057 \\
\hline
\end{tabular}
\caption{Average deviation from strict quota for $4$-ary tree.}
\label{table:avg_dev_quota_4ary}
\end{table}

% Maximum Deviation - Binary Tree
\begin{table}[htbp]
\centering
\begin{tabular}{|l|cc|cc|cc|cc|}
\hline
\multirow{2}{*}{Method} & \multicolumn{2}{c|}{$n=15$} & \multicolumn{2}{c|}{$n=31$} & \multicolumn{2}{c|}{$n=63$} & \multicolumn{2}{c|}{$n=127$} \\
& $h=100$ & $h=500$ & $h=100$ & $h=500$ & $h=100$ & $h=500$ & $h=100$ & $h=500$ \\
\hline
Adams & 0.7713 & 0.7438 & 0.9306 & 0.9165 & 1.0777 & 1.0638 & 1.2304 & 1.2050 \\
Jefferson & 0.7402 & 0.7575 & 0.9118 & 0.9241 & 1.0566 & 1.0685 & 1.1930 & 1.2060 \\
Quota & 0.7402 & 0.7575 & 0.9118 & 0.9241 & 1.0566 & 1.0685 & 1.1930 & 1.2060 \\
UC Quota & 0.6555 & 0.6665 & 0.7589 & 0.7677 & 0.8279 & 0.8364 & 0.8803 & 0.8900 \\
\hline
\end{tabular}
\caption{Maximum deviation from strict quota for binary tree.}
\label{table:max_dev_quota_binary}
\end{table}

% Maximum Deviation - $4$-ary Tree
\begin{table}[htbp]
\centering
\begin{tabular}{|l|cc|cc|cc|cc|}
\hline
\multirow{2}{*}{Method} & \multicolumn{2}{c|}{$n=29$} & \multicolumn{2}{c|}{$n=61$} & \multicolumn{2}{c|}{$n=125$} & \multicolumn{2}{c|}{$n=253$} \\
& $h=100$ & $h=500$ & $h=100$ & $h=500$ & $h=100$ & $h=500$ & $h=100$ & $h=500$ \\
\hline
Adams & 0.9832 & 0.9447 & 1.1335 & 1.1171 & 1.1533 & 1.2537 & 1.1568 & 1.2801 \\
Jefferson & 0.9463 & 0.9701 & 1.0384 & 1.1066 & 1.0852 & 1.1990 & 1.1066 & 1.2412 \\
Quota & 0.8382 & 0.8512 & 0.9144 & 0.9356 & 0.9667 & 1.0028 & 0.9969 & 1.0452 \\
UC Quota & 0.7700 & 0.7877 & 0.8208 & 0.8308 & 0.8889 & 0.8663 & 0.9373 & 0.9154 \\
\hline
\end{tabular}
\caption{Maximum deviation from strict quota for $4$-ary tree.}
\label{table:max_dev_quota_4ary}
\end{table}

\FloatBarrier

\normalsize
\subsection*{Acknowledgments}

This work is partially supported by the Singapore Ministry of Education under grant number MOE-T2EP20221-0001 and by an NUS Start-up Grant. Part of this research was carried out while Ulrike Schmidt-Kraepelin was supported by the National Science Foundation under Grant No. DMS-1928930 and by the Alfred P. Sloan Foundation under grant G-2021-16778 while being in residence at the Simons Laufer Mathematical Sciences Institute (formerly MSRI) in Berkeley, California, during the Fall 2023 semester.
We would like to thank Ayumi Igarashi for helpful discussions and the anonymous reviewer for valuable feedback.

\normalsize

\bibliographystyle{plainnat}
\bibliography{main}

\end{document}